\newcommand{\cd}{\mathrm{cd}}
\newcommand{\G}{\mathcal{G}}
\newcommand{\Vset}{\mathrm{VS}}
\newcommand{\Eset}{\mathrm{ES}}
\newcommand{\V}{\mathcal{V}}
\newcommand{\E}{\mathcal{E}}
\newcommand{\N}{\mathbb{N}}
\long\def\rem#1{}
\def\B{\{0,1\}}
\newcommand{\LG}{\mathcal{LG}}
\newtheorem{Definition}{Definition}
\newtheorem{theorem}{Theorem}
\newtheorem{lemma}[theorem]{Lemma}
\newcommand{\thmref}[1]{\hyperref[#1]{{Theorem~\ref*{#1}}}}
\newcommand{\lemref}[1]{\hyperref[#1]{{Lemma~\ref*{#1}}}}
\newcommand{\remref}[1]{\hyperref[#1]{{Remark~\ref*{#1}}}}
\newcommand{\corref}[1]{\hyperref[#1]{{Corollary~\ref*{#1}}}}
\newcommand{\eqnref}[1]{\hyperref[#1]{{Equation~(\ref*{#1})}}}
\newcommand{\claimref}[1]{\hyperref[#1]{{Claim~\ref*{#1}}}}
\newcommand{\remarkref}[1]{\hyperref[#1]{{Remark~\ref*{#1}}}}
\newcommand{\propref}[1]{\hyperref[#1]{{Proposition~\ref*{#1}}}}
\newcommand{\factref}[1]{\hyperref[#1]{{Fact~\ref*{#1}}}}
\newcommand{\defref}[1]{\hyperref[#1]{{Definition~\ref*{#1}}}}
\newcommand{\exampleref}[1]{\hyperref[#1]{{Example~\ref*{#1}}}}
\newcommand{\hypref}[1]{\hyperref[#1]{{Hypothesis~\ref*{#1}}}}
\newcommand{\secref}[1]{\hyperref[#1]{{Section~\ref*{#1}}}}
\newcommand{\chapref}[1]{\hyperref[#1]{{Chapter~\ref*{#1}}}}
\newcommand{\apref}[1]{\hyperref[#1]{{Appendix~\ref*{#1}}}}
\newcommand{\ignore}[1]{}
\begin{document}

\title{Improved Quantum Query Algorithms for\\  Triangle Finding and Associativity Testing%
\thanks{Partially supported by the French ANR Defis project
ANR-08-EMER-012 (QRAC)
and
the European Commission IST STREP project
25596 (QCS). Research at the Centre
for Quantum Technologies is funded by the Singapore Ministry of Education 
and the National Research Foundation.}
}
\author{Troy Lee\thanks{Centre for Quantum Technologies, National University
    of Singapore, Singapore 117543. \texttt{troyjlee@gmail.com}}
\and
Fr\'ed\'eric Magniez\thanks{CNRS, LIAFA, Univ Paris Diderot, Sorbonne Paris-Cit\'e, 75205 Paris, France. \texttt{frederic.magniez@univ-paris-diderot.fr}}
\and
Miklos Santha\thanks{CNRS, LIAFA, Univ Paris Diderot, Sorbonne Paris-Cit\'e, 75205 Paris, France;
and Centre for Quantum Technologies, National University
    of Singapore, Singapore 117543.
    \texttt{miklos.santha@liafa.univ-paris-diderot.fr}}
    }

\date{}
\maketitle

\begin{abstract}
We show that the quantum query complexity of detecting if an $n$-vertex graph contains a triangle is
$O(n^{9/7})$.  This improves the previous best algorithm of Belovs~\cite{spanCert} making 
$O(n^{35/27})$ queries.  For the problem of determining if an operation 
$\circ : S \times S \rightarrow S$ is associative, we give an algorithm making $O(|S|^{10/7})$ queries, 
the first improvement to the trivial $O(|S|^{3/2})$ application of Grover search.  

Our algorithms are designed using the learning graph framework of Belovs.  We give a family 
of algorithms for detecting constant-sized subgraphs, which can possibly be directed and colored.  
These algorithms are designed in a simple high-level language; our main theorem shows how this high-level 
language can be compiled as a learning graph and gives the resulting complexity.

The key idea to our improvements is to allow more freedom in the parameters of the database kept by the
algorithm.  As in our previous work~\cite{LMS11}, the edge slots maintained in the database are 
specified by a graph whose edges are the union of regular bipartite graphs,  the overall structure of which mimics 
that of the graph of the certificate.  By allowing these bipartite graphs to be unbalanced and of variable degree we 
obtain better algorithms.
\end{abstract}

\section{Introduction}
Quantum query complexity is a black-box model of quantum computation, where the resource measured is 
the number of queries to the input needed to compute a function.  This model captures the great algorithmic successes 
of quantum computing like the search algorithm of Grover~\cite{gro96} and the period finding subroutine of Shor's 
factoring algorithm~\cite{shor97}, while at the same time is simple enough that one can often show tight lower bounds.

Recently, there have been very exciting developments in quantum query complexity.  
Reichardt~\cite{Reichardt10advtight} showed that the general adversary bound, formerly just a lower bound technique 
for quantum query complexity~\cite{HoyerLeeSpalek07negativeadv}, is also an upper bound.  This characterization 
opens a new avenue for designing quantum query algorithms.  The general adversary bound can be written as a 
relatively simple semidefinite program, thus by providing a feasible solution to the minimization form of this 
program one can upper bound quantum query complexity.  

This plan turns out to be quite difficult to implement as the minimization form of the adversary bound has exponentially 
many constraints.  Even for simple functions it can be challenging to directly write down a feasible solution, much 
less worry about finding a solution with good objective value.  

To surmount this problem, Belovs~\cite{spanCert} introduced the beautiful model of learning graphs, which can be
viewed as the minimization form of the general adversary bound with additional structure imposed on the form 
of the solution.  This additional structure makes learning graphs easier to reason about by ensuring 
that the constraints are \emph{automatically} satisfied, leaving one to worry about optimizing the objective value.  

Learning graphs have already proven their worth, with Belovs using this model to give an algorithm for triangle 
finding with complexity $O(n^{35/27})$, improving the quantum walk algorithm~\cite{MSS07} of complexity 
$O(n^{1.3})$.  Belovs' algorithm was generalized to detecting constant-sized subgraphs~\cite{zhu11, LMS11}, giving 
an algorithm of complexity $o(n^{2-2/k})$ for determining if a graph contains a $k$-vertex subgraph $H$, 
again improving the~\cite{MSS07} bound of $O(n^{2-2/k})$.  All these 
algorithms use the most basic model of learning graphs, that we also use in this paper.  A more general model of 
learning graphs (introduced, though not used in Belovs' original paper) was used to give an $o(n^{3/4})$ algorithm for 
$k$-element distinctness, when the inputs are promised to be of a certain form~\cite{BL11}.  Recently, Belovs 
further generalized the learning graph model and removed this promise to obtain an $o(n^{3/4})$ algorithm for the 
general $k$-distinctness problem~\cite{Belovs12}.

In this paper, we continue to show the power of the learning graph model.  We give an algorithm for detecting 
a triangle in a graph making $O(n^{9/7})$ queries.  This lowers the exponent of Belovs algorithm from 
about $1.296$ to under $1.286$.  For the problem of testing if an operation $\circ: S \times S \rightarrow S$ is 
associative, where $|S|=n$, we give an algorithm making $O(n^{10/7})$ queries, the first improvement over the 
trivial application of Grover search making $O(n^{3/2})$ queries.  Previously, D\"{o}rn and Thierauf \cite{DT08}
gave a quantum walk based algorithm to test if $\circ: S \times S \rightarrow S'$ is associative that improved on Grover 
search but only when $|S'| < n^{3/4}$.

More generally, we give a family of algorithms for detecting constant-sized subgraphs, which can possibly be 
directed and colored.  Algorithms in this family can 
be designed using a simple high-level language.  Our main theorem shows how to compile this language as 
a learning graph, and gives the resulting complexity.  We now explain in more detail how our algorithms 
improve over previous work.

{\bf Our contribution.}
We will explain the new ideas in our algorithm using triangle detection as an example.  We first 
review the quantum walk algorithm of~\cite{MSS07}, and the learning graph algorithm of Belovs~\cite{spanCert}.  
For this high-level overview we just focus on the database of edge slots of the input graph $G$ 
that is maintained by the algorithm.  A quantum walk algorithm explicitly maintains such a database, and the 
nodes of a learning graph are labeled by sets of queries which we will similarly interpret 
as the database of the algorithm.  

In the quantum walk algorithm~\cite{MSS07} the database consists of an $r$-element subset of the $n$-vertices of~$G$ 
and all the edge slots among these $r$-vertices.  That is, the presence or absence of an edge in $G$ among a 
complete $r$-element subgraph is maintained by the database.  In the learning graph algorithm 
of Belovs, the database consists of a random subgraph with edge density $0\le s \le 1$ of a complete 
$r$-element subgraph. 
In this way, on average, $O(sr^2)$ many edge slots are queried among the $r$-element subset, making 
it cheaper to set up this database.  This saving is what results in the improvement of Belovs' algorithm.  
Both algorithms finish by using search plus graph collision to locate a vertex that 
is connected to the endpoints of an edge present in the database, forming a triangle.

Zhu~\cite{zhu11} and Lee et al.~\cite{LMS11} extended the triangle finding algorithm of Belovs to finding constant 
sized subgraphs.  While the algorithm of Zhu again maintains a database of a random subgraph of an $r$-vertex 
complete graph with edge density $s$, the algorithm of Lee et al.\  instead used a more structured database.  
Let $H$ be a $k$-vertex subgraph with vertices labeled from $[k]$.  To determine if $G$ contains a copy of $H$, 
the database of the algorithm consists of $k-1$ sets $A_1, \ldots, A_{k-1}$ of size $r$ and 
for every $\{i,j\} \in H -\{k\}$ the edge slots of~$G$ according to a 
$sr$-regular bipartite graph between $A_i$ and $A_j$.  Again both algorithms finish by using search plus graph 
collision to find a vertex connected to edges in the database to form a copy of $H$.

In this work, our database is again the edge slots of~$G$ queried according according to the union 
of regular bipartite graphs whose overall structure mimics the structure of $H$.  
Now, however, we allow optimization over all parameters of the database---we allow the size of the set $A_i$ to be a 
parameter $r_i$ that can be independently chosen; similarly, we allow the degree of the bipartite 
graph between $A_i$ and $A_j$ to be a variable $d_{ij}$.  This greater freedom in the parameters of the 
database allows the improvement in triangle finding from $O(n^{35/27})$ to $O(n^{9/7})$.  Instead of 
an $r$-vertex graph with edge density $s$, our algorithm uses as a database a complete unbalanced bipartite 
graph with left hand side of size $r_1$ and right hand side of size $r_2$.  Taking $r_1 < r_2$ allows a more 
efficient distribution of resources over the course of the algorithm.  As before, the algorithm finishes by using search 
plus graph collision to find a vertex connected to endpoints of an edge in the database.  

The extension to functions of the form $f:[q]^{n \times n} \rightarrow \B$, like associativity, 
comes from the fact that the basic learning graph model that we use depends only on the structure of a 1-certificate 
and not on the values in a  1-certificate.  
This property means that an algorithm for detecting a subgraph $H$ can be immediately 
applied to detecting $H$ with specified edge colors in a colored graph.   

If an operation $\circ: S \times S \rightarrow S$ is non-associative, then there are 
elements $a,b,c$ such that $a \circ (b \circ c) \ne (a \circ b) \circ c$.  A certificate consists of the $4$ (colored and 
directed) edges $b\circ c = e, a \circ e, a \circ b =d$, and $d \circ c$ such that $a \circ e \ne d \circ c$.  The graph 
of this certificate is a $4$-path with directed edges, and using our algorithm for this graph gives complexity 
$O(|S|^{10/7})$.

We provide a high-level language for designing algorithms within our framework.  The algorithm begins by choosing 
size parameters for each $A_i$ and degree parameters for the bipartite graph 
between $A_i$ and $A_j$.  Then one can choose the order in which to load vertices $a_i$ and edges 
$(a_i, a_j)$ of a $1$-certificate, according to the rules that both endpoints of an edge must be loaded before the edge, 
and at the end all edges of the certificate must be loaded.  Our main theorem \thmref{thm:main} shows how 
to implement this high-level algorithm as a learning graph and gives the resulting complexity.  

With larger subgraphs, optimizing over the set size and degree parameters to obtain an algorithm of minimal 
complexity becomes unwieldy to do by hand.  Fortunately, this can be phrased as a linear program and we 
provide code to compute a set of optimal parameters\footnote{code is available at 
\url{https://github.com/troyjlee/learning_graph_lp}}.

\section{Preliminaries}\label{sec:lg}
The \emph{quantum query complexity} of a function $f$, denoted $Q(f)$, is the number of 
input queries needed to evaluate $f$ with error at most $1/3$.  We refer 
the reader to the survey~\cite{hs05} for precise definitions and background.

For any integer $q\geq 1$, let $[q] = \{1, 2,\ldots, q\}$.
We will deal with boolean functions  of the form 
$f : [q]^{n \times n} \rightarrow \B$, where the input to the function can 
be thought of as the complete directed graph (possibly with self-loops) on vertex set $[n]$, whose edges are colored by elements 
from $[q]$. When $q=2$, the input is of course just a directed graph (again possibly with self-loops).
A {\em partial assignment} is an element of the set $([q] \cup \{\star\})^{n \times n}$. For  partial assignments
$\alpha_1$ and $\alpha_2$ we say that $\alpha_1$ is a {\em restriction} of $\alpha_2$ (or alternately $\alpha_2$ 
is an {\em extension} of $\alpha_1$) if whenever $\alpha_1(i,j) \neq \star$ then $\alpha_1(i,j) = \alpha_2(i,j)$.
A \emph{$1$-certificate} for $f$ is a partial assignment $\alpha$ such that $f(x)=1$ for every
extension $x \in [q]^{n \times n}$ of $\alpha$. 
If $\alpha$ is a 1-certificate and $x \in [q]^{n \times n}$ is an extension of 
$\alpha$, we also say that $\alpha$ is a 1-certificate for $f$ and $x$.
A 1-certificate $\alpha$ is {\em minimal} if no proper restriction of $\alpha$ is a 1-certificate.
The {\em index set} of a 1-certificate $\alpha$ for $f$ is the set
$I_{\alpha} = \{ (i,j) \in [n] \times [n] : \alpha(i,j) \neq \star \}$. 
Besides these standard notions, we will also need the notion of the graph of a $1$-certificate.
For a graph $G$, let $V(G)$ denote the set of vertices, and $E(G)$ the set of edges of~$G$.

\begin{Definition}[Certificate graph]
Let $\alpha$ be a $1$-certificate for $f : [q]^{n \times n} \rightarrow \B$.  The \emph{certificate graph} $H_{\alpha}$ of $\alpha$ is defined
by $E (H_{\alpha}) = I_{\alpha}$, and $V(H_{\alpha})$ is the set of elements in $[n]$ which are adjacent to an edge in $I_{\alpha}$.
The \emph{size} of a certificate graph is the cardinality of its edges. 
A \emph{minimal certificate graph} for $x$, such that $f(x)=1$, is the certificate graph of a minimal 1-certificate for $f$ and $x$.
The \emph{$1$-certificate complexity} of $f$ is the size of the biggest minimal certificate graph for some $x$ such that $f(x)=1$.
\end{Definition}
Intuitively, if $x \in [q]^{n \times n}$ is an extension of a $1$-certificate $\alpha$, the certificate graph of $\alpha$ 
represents queries that are sufficient to verify $f(x)=1$.

Vertices of our learning graphs will be labeled
by sets of edges coming from the union of a bunch of bipartite graphs.  
We will specify these bipartite graphs by their degree sequences, the number of vertices on 
the left hand side and right hand side of a given degree.  The following notation will be useful to do this.
\begin{Definition}[Type of bipartite graph]
\label{def:type}
A bipartite graph between two sets $Y_1$ and $Y_2$ is of \emph{type}
$(\{(n_1,d_1), \ldots, (n_j,d_j)\} , \{(m_1,g_1), \ldots, (m_\ell,g_\ell)\})$ if
$Y_1$ has $n_i$ vertices of degree $d_i$ for $i = 1 , \ldots , j$, and 
$Y_2$ has $m_i$ vertices of degree $g_i$ for $i = 1 , \ldots , \ell$, and this is a complete 
listing of vertices in the graph, i.e. $|Y_1|=\sum_{i=1}^j n_i$ 
and $|Y_2|=\sum_{i=1}^\ell m_i$.  Note also that $\sum_{i=1}^j n_i d_i = \sum_{i=1}^\ell m_i g_i$.  
\end{Definition}

\paragraph{Learning graphs}
We now formally define a learning graph and its complexity.  We first define a learning graph in the abstract.
\begin{Definition}[Learning graph]\ 
A learning graph $\G$ is a 5-tuple $(\V, \E, w, \ell, \{p_y: y \in Y\})$
where $(\V,\E)$ is a  
rooted, weighted and directed acyclic graph,  the weight function 
$w : \E \rightarrow \mathbb{R}$ maps learning graph 
edges to positive real numbers, the length function $\ell:\E \rightarrow \N$ assigns each edge a 
natural number, and $p_y : \E \rightarrow \mathbb{R}$   
is a unit flow whose source is the root, for every $y \in Y$.
\end{Definition}

A learning graph for a function has additional requirements as follows.
\begin{Definition}[Learning\! graph\! for\! a\! function]
Let $f : [q]^{n \times n} \rightarrow \B$ be a function.  A learning graph $\G$ for $f$ is a 5-tuple 
$(\V, \E,S, w, \{p_y : y \in f^{-1}(1)\})$, where 
$S: \V \rightarrow 2^{n \times n}$ maps $v \in \V$ to a label $S(v)\subseteq [n] \times [n]$ of variable indices,
and $(\V, \E, w, \ell, \{p_y: y \in f^{-1}(1)\})$ is a learning graph 
for the length function $\ell$ defined as $\ell((u,v)=|S(v) \setminus S(u)|$ for each edge $(u,v)$.
For the root $r \in \V$ we have $S(r) = \emptyset$, 
and every learning graph edge $e=(u,v)$ satisfies $S(u) \subseteq S(v)$.
For each input $y \in f^{-1}(1)$, the set $S(v)$ contains the index set of a $1$-certificate for $y$ on $f$, for every sink 
$v \in \V$ of $p_y$. 
\end{Definition}
In our construction of learning graphs we usually define $S$ by more colloquially stating the {\em label} of each vertex.  
Note that it can be 
the case for an edge $(u,v)$ that $S(u)=S(v)$ and the length of the edge is zero.
In Belovs~\cite{spanCert} what we define here is called a reduced learning graph, and a learning graph is 
restricted to have all edges of length at most one.

In this paper we will discuss functions whose inputs are themselves graphs.  To prevent confusion
we will refer to vertices and edges of the learning graph as \emph{$L$-vertices} and \emph{$L$-edges} respectively.

We now define the complexity of a learning graph.  For the analysis it will be helpful to define the complexity not just for
the entire learning graph but also for {\em stages} of the learning graph $\G$.  
By  {\em level $d$} of $\G$ we refer to the set of vertices at distance 
$d$ from the root.  A {\em stage} is the set of edges of $\G$ between level $i$ and level $j$, for some $i<j$.  
For a subset $V \subseteq \V$ of the $L$-vertices let $V^+ = \{ (v,w) \in \E : v \in V\}$ and similarly let 
$V^- = \{ (u,v) \in\E : v \in V\}$.  For a vertex $v$ we will write $v^+$ instead of ${\{v\}}^+$, and similarly for $v^-$ 
instead of ${\{v\}}^-$.

\begin{Definition}[Learning graph complexity] \label{e:lg_cost}
Let $\G$ be a learning graph, and let $E \subseteq \E$ be the edges of a stage.
The negative complexity of $E$ is $$C_0(E)=\sum_{e \in E} \ell(e) w(e).$$ The positive complexity
of $E$ under the flow $p_y$ is 
\begin{equation*}
C_{1,y}(E)= 
\sum_{e \in E}  \frac{\ell(e)}{w(e)}p_y(e)^2.
\end{equation*}
The positive complexity of $E$ is $$C_1(E) =\max_{y \in Y} C_{1,y}(E).$$
The complexity of $E$ is $C(E)=\sqrt{C_0(E) C_1(E)}$, and 
the learning graph complexity of $\G$ is $C(\G)= C(\E)$.  
The learning graph 
complexity of a function $f$, denoted $\LG(f)$, is the minimum learning graph complexity of 
a learning graph for $f$.
\end{Definition}


\begin{theorem}[Belovs] \label{t:lg_alg} 
$Q(f)=O(\LG(f))$.  
\end{theorem}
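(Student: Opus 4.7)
The plan is to convert $\G$ into a feasible solution for the minimization form of the general adversary bound $\ADVpm$ with value $O(C(\G))$, and then invoke Reichardt's theorem $Q(f)=\Theta(\ADVpm(f))$. Equivalently, this amounts to building a span program for $f$ from $\G$ and bounding its witness size by $C(\G)$.

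The first step is a normalization: reduce to the case where every $L$-edge has length exactly $1$. Given an $L$-edge $e=(u,v)$ with $\ell(e)=k$, subdivide it into a directed chain of $k$ unit-length edges passing through $k-1$ auxiliary $L$-vertices whose labels interpolate between $S(u)$ and $S(v)$ one index at a time. Assigning each new edge weight $w(e)$ and flow $p_y(e)$ preserves both $C_0$ and $C_1$ by a direct summation. With this reduction in hand, the second step is to assemble the span program: for each length-$1$ $L$-edge with $S(v)=S(u)\cup\{(i,j)\}$, and for each $y\in f^{-1}(1)$, introduce a vector in coordinate $(i,j)$ of norm proportional to $p_y(e)/\sqrt{w(e)}$ whose direction is determined by the restriction of $y$ to $S(u)$. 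The fact that at every sink of $p_y$ the label $S(v)$ contains the index set of a $1$-certificate for $y$ guarantees that any $0$-input $z$ disagrees with $y$ on at least one already-loaded index, which produces the nonzero inner product required by the $\ADVpm$ constraint for the pair $(y,z)$.

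The final step is to bound the witness size. A standard Cauchy--Schwarz argument (equivalently, rescaling all vectors by $\sqrt{C_1/C_0}$) shows the value is at most $\sqrt{C_0(\G)\,C_1(\G)}=C(\G)$, and Reichardt's theorem then yields $Q(f)=O(C(\G))$; taking an infimum over learning graphs for $f$ completes the proof. The main obstacle is the second step: choosing the vector directions consistently so that the adversary constraint holds simultaneously for all pairs $(y,z)\in f^{-1}(1)\times f^{-1}(0)$. This requires exploiting the acyclicity of $\G$, the monotonicity $S(u)\subseteq S(v)$ along each edge, and the flow conservation condition on $p_y$; once that bookkeeping is in place, the AM-GM style complexity bound follows cleanly.
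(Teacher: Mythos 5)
Your outline is exactly the argument behind this theorem: the paper does not prove it but cites Belovs, whose proof proceeds precisely as you describe --- subdivide edges to unit length, assign to each edge $e=(u,v)$ and each input a vector in the coordinate of the newly loaded index whose direction encodes that input's restriction to $S(u)$ (scaled by $p_y(e)/\sqrt{w(e)}$ for $1$-inputs and by $\sqrt{w(e)}$ for $0$-inputs), use the certificate condition at the sinks of $p_y$ to see that the edges contributing to the constraint for a pair $(y,z)$ form a cut whose total flow is exactly $1$, and rescale the two families of vectors by $(C_0/C_1)^{\pm 1/4}$ to balance $\max(C_0,C_1)$ down to $\sqrt{C_0 C_1}$. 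The one point you gloss over is that Reichardt's $Q(f)=\Theta(\ADVpm(f))$ for non-Boolean input alphabets originally carried a $\log q$ overhead, and its removal is the content of~\cite{BL11}, as the paper notes immediately after the theorem statement.
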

Originally Belovs showed this theorem with an additional $\log q$ factor for functions over an input alphabet of 
size $q$; this logarithmic factor was removed in~\cite{BL11}.

\paragraph{Analysis of learning graphs}
Given a learning graph $\G$, the easiest way to obtain another learning graph is to modify the
weight function of $\G$. We will often use this reweighting scheme to obtain learning graphs with better complexity 
or complexity that is more convenient to analyze.
When $\G$ is understood from the context, and when $w'$ is the new weight function, for the edges $E \subseteq \E$
of a stage, we denote the complexity of $E$ with respect to $w'$ by $C^{w'}(E)$.

The following useful lemma of Belovs gives an example of the reweighting method.  It 
shows how to upper bound the complexity of a learning graph by partitioning it into a 
constant number of stages and summing the complexities of the stages.  

\begin{lemma}[Belovs]\label{l:stages}
If $\E$ can be partitioned into a constant number $k$ of stages $E_1, \ldots, E_k$, 
then there exists a weight function $w'$ such that
$$C^{w'}(\G)=O(C(E_1) + \ldots + C(E_k)).$$
\end{lemma}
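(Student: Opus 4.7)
The plan is to rescale the weights stage-by-stage, since multiplying the weights on a single stage by a positive constant leaves the ratio that defines the per-stage complexity invariant. Specifically, pick positive scalars $\alpha_1,\dots,\alpha_k$ and define
\[
w'(e) = \alpha_i \cdot w(e) \quad \text{if } e \in E_i.
\]
The flows $p_y$ are unchanged (they depend on $\E$ and the sinks, not on $w$), so on each stage $E_i$ the negative complexity gets multiplied by $\alpha_i$ and the positive complexity by $1/\alpha_i$:
\[
C_0^{w'}(E_i) = \alpha_i C_0(E_i), \qquad C_{1,y}^{w'}(E_i) = \alpha_i^{-1} C_{1,y}(E_i).
\]

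Next I would add up across stages. Summing negative complexities is exact, and for the positive side I would use $\max_y \sum_i a_i(y) \le \sum_i \max_y a_i(y)$, which is where the constant $k$ matters:
\[
C_0^{w'}(\E) = \sum_{i=1}^k \alpha_i C_0(E_i), \qquad
C_1^{w'}(\E) \le \sum_{i=1}^k \alpha_i^{-1} C_1(E_i).
\]
Now I would choose $\alpha_i$ to equalize the two contributions from stage $i$, namely $\alpha_i = \sqrt{C_1(E_i)/C_0(E_i)}$ (degenerate cases where $C_0(E_i)=0$ or $C_1(E_i)=0$ can be handled by letting $\alpha_i\to\infty$ or $\alpha_i\to 0$, or more cleanly by deleting zero-weight/zero-flow edges). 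With this choice,
\[
\alpha_i C_0(E_i) = \alpha_i^{-1} C_1(E_i) = \sqrt{C_0(E_i)C_1(E_i)} = C(E_i).
\]

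Plugging back in gives $C_0^{w'}(\E) \le \sum_i C(E_i)$ and $C_1^{w'}(\E) \le \sum_i C(E_i)$, so
\[
C^{w'}(\G) = \sqrt{C_0^{w'}(\E)\, C_1^{w'}(\E)} \le \sum_{i=1}^k C(E_i) = O\!\left(C(E_1)+\dots+C(E_k)\right),
\]
where the hidden constant absorbs the factor of $k$ (which is constant by hypothesis). There is no real obstacle: the only subtlety is the swap $\max_y \sum_i \le \sum_i \max_y$, which is exactly the step that forces the ``constant number of stages'' assumption, and the degenerate weights, which are handled by a limiting argument or trivial preprocessing.
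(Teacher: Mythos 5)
Your proof is correct and is the standard rebalancing argument: the paper states this lemma without proof (attributing it to Belovs), and Belovs' own proof is exactly this stage-wise rescaling with $\alpha_i = \sqrt{C_1(E_i)/C_0(E_i)}$, which equalizes each stage's contribution to the negative and positive complexities. One small quibble: the swap $\max_y \sum_i \le \sum_i \max_y$ is valid and costs nothing for any number of stages, so the constant-$k$ hypothesis is not actually what makes that step work --- your final bound holds with constant $1$, and the ``constant number of stages'' condition is simply the regime in which the conclusion is a useful $O(\cdot)$ statement.
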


Now we will focus on evaluating the complexity of a stage.  Our learning graph algorithm for triangle 
detection is of a very simple form, where all $L$-edges present in the graph have weight one, all $L$-vertices in a level 
have the same degree, incoming and outgoing flows are uniform over a subset of $L$-vertices in each level, and 
all $L$-edges between levels are of the same length.
In this case the complexity of a stage between consecutive levels can be estimated quite 
simply.
\begin{lemma}\label{e:simple_cost}
Consider a stage of a learning graph between consecutive levels.
Let $V$ be the set of $L$-vertices at the beginning of the stage.  
Suppose that each $L$-vertex $v \in V$ is of degree-$d$ with all outgoing $L$-edges $e$
of weight $w(e)=1$ and of length $\ell(e)\le\ell$.
Furthermore, say that the incoming flow is uniform over $L$-vertices 
$W \subseteq V$, and is uniformly directed from each $L$-vertex $v \in W$ 
to $g$ of the $d$ possible neighbors.  
Then the complexity of this stage is at most
$\ell \sqrt{\frac{d|V|}{g|W|}}$.
\end{lemma}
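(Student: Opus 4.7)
The plan is to unpack \defref{e:lg_cost} and bound the two factors $C_0(E)$ and $C_1(E)$ separately using the regularity hypotheses of the lemma, then take the geometric mean as prescribed by the definition of complexity.

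First, I would bound the negative complexity $C_0(E) = \sum_{e \in E} \ell(e)\, w(e)$. The hypothesis that every $v \in V$ has out-degree exactly $d$ means the stage contains $d|V|$ $L$-edges in total; each has weight $1$ and length at most $\ell$, so an immediate term-by-term estimate gives $C_0(E) \le \ell\, d\, |V|$.

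Next, I would pin down the value of $p_y$ on each edge of the stage in order to bound $C_{1,y}(E)$. Here the two uniformity hypotheses do all the work. Since $p_y$ is a unit flow sourced at the root and the flow entering this stage is uniform over the subset $W \subseteq V$, flow conservation forces each $v \in W$ to carry exactly $1/|W|$ incoming units; the hypothesis that each such $v$ directs its outgoing flow uniformly to $g$ of its $d$ successors then pins the flow on each of those $g$ edges to $1/(g|W|)$, while all other $L$-edges of the stage (including every $L$-edge out of $V \setminus W$) carry zero flow. Substituting $w(e) = 1$, $\ell(e) \le \ell$, and summing $\ell(e)\, p_y(e)^2 / w(e)$ over the $g|W|$ edges with positive flow collapses to the clean bound $\ell/(g|W|)$, which is independent of $y$ and hence also bounds $C_1(E) = \max_y C_{1,y}(E)$.

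Combining via $C(E) = \sqrt{C_0(E)\, C_1(E)}$ then yields the claimed $\ell\sqrt{d|V|/(g|W|)}$. I do not expect a real obstacle; the entire argument is bookkeeping that exploits how the regularity of the $L$-graph together with the uniformity of the flow makes $|E|$, $w(e)$, $\ell(e)$, and $p_y(e)$ each take a single value, so every sum collapses to a product. The one place to be a bit careful is the justification of the flow values on the stage: I would want to note that, because $V$ is a full level of the DAG and $p_y$ is a unit flow from the root to sinks further down, the total flow crossing into level $V$ is exactly $1$, so the ``uniform over $W$'' hypothesis forces the per-vertex value $1/|W|$ used above.
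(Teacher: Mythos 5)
Your proposal is correct and follows exactly the paper's own (very terse) argument: total weight $d|V|$ gives $C_0(E)\le \ell\, d|V|$, uniformity of the flow gives $p_y(e)=(g|W|)^{-1}$ on each of the $g|W|$ flow-carrying edges so $C_1(E)\le \ell/(g|W|)$, and the geometric mean yields the bound. Your extra care in justifying the per-vertex flow value $1/|W|$ via conservation is a sound elaboration of what the paper leaves implicit.
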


\begin{proof}
The total weight is $d|V|$.  The flow through each of the $g |W|$ many $L$-edges is 
$(g |W|)^{-1}$.  Plugging these into \defref{e:lg_cost} gives the lemma.
\end{proof}
To analyze the cost of our algorithm for triangle detection, we will repeatedly use \lemref{e:simple_cost}.  
The contributions to the complexity of a stage
are naturally broken into three parts: the 
length $\ell$, the \emph{vertex ratio} $|V| / |W|$, and the \emph{degree ratio} $d/g$.  This terminology will 
be  helpful in discussing the complexity of stages.  

For our more general framework given in \secref{sec:general}, flows will no longer be uniform.  To evaluate the 
complexity in this case, we will use several lemmas developed in~\cite{LMS11}.  The main idea is to use 
the symmetry of the function to decompose flows as a convex combination of uniform flows over disjoint edge sets.  
A natural extension of \lemref{e:simple_cost} can then be used to evaluate the complexity.  To state the lemma we first 
need a definition.
For a set of $L$-edges $E$, we let $p_y(E)$ denote the {\em value} of the flow $p_y$ over $E$, that is 
$p_y(E) = \sum_{e\in E}p_y(e)$. 
\begin{Definition}[Consistent flows] 
Let $E$ be a stage of $\G$ between two consecutive levels, 
and let $V_1, \ldots, V_s$ be a partition of
the $L$-vertices at the beginning of the stage.
We say that $\{p_y\}$ is consistent with $V_1^+ , \ldots, V_s^+ $ if  
$p_y(V_i^{+} )$ is independent of $y$ for each $i$.
\end{Definition}

The next lemma is the main tool for evaluating the complexity of learning graphs in our main theorem, \thmref{thm:main}.

\begin{lemma}[\cite{LMS11}]
\label{e:simple_costg}
Let $E$ be a stage of $\G$ between two consecutive levels.  Let $V$ be the set of $L$-vertices at the 
beginning of the stage and suppose that each $v \in V$ has outdegree $d$ and all $L$-edges $e$ of the stage 
satisfy $w(e)=1$ and $\ell(e) \le \ell$.  
Let $V_1, \ldots, V_s$ be a partition of~$V$, 
and for all $y$ and $i$,
let $W_{y,i} \subseteq V_i$ be the set of vertices in $V_i$
which receive positive flow under $p_y$.
Suppose that 
\begin{enumerate}
\item the flows $\{p_y\}$ are consistent with $\{{V_i}^+\}$,
\item  
$|W_{y,i}| $ is independent from $y$ for every $i$, and for all $v \in W_{y,i}$ we have $p_y(v^+) = p_y({V_i}^+)/|W_{y,i}|$,
\item there is a $g$ such that for each vertex $v \in W_{y,i}$ the flow is directed uniformly to $g$ of the $d$ many 
neighbors.  
\end{enumerate}
Then there is a new weight function $w'$ such that
\begin{equation}
\label{e:cost_formula}
C^{w'}(E) \leq   \max_i  \ell \sqrt{\frac{d}{g} \frac{ |V_i|}{|W_{y,i}|}} \enspace .
\end{equation}
\end{lemma}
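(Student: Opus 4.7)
The plan is to extend \lemref{e:simple_cost} by a block-diagonal reweighting dictated by the partition $V_1, \ldots, V_s$. I will define a new weight function $w'$ that is constant on the L-edges leaving each block, setting $w'(e) = \beta_i$ for every $e \in V_i^+$, and choose the $\beta_i$'s so that within each block the negative and positive complexities exactly balance. Conditions 1 and 2 ensure that the quantities $\alpha_i := p_y(V_i^+)$ and $|W_{y,i}|$ are independent of $y$, so the resulting weights $\beta_i$ will likewise be $y$-independent and give a genuine single weight function.

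For the per-block analysis I apply \defref{e:lg_cost} to $V_i^+$. Since each $v \in V$ has outdegree $d$ and all L-edges in the stage satisfy $w(e) = 1$ and $\ell(e) \le \ell$, the negative contribution is at most $\ell \beta_i d |V_i|$. On the positive side, conditions 2 and 3 completely specify the flow through $V_i^+$: the mass $\alpha_i$ is distributed uniformly over the $|W_{y,i}|$ vertices of $W_{y,i}$, and from each such vertex it is split uniformly over $g$ of the $d$ outgoing edges, so each of the $g|W_{y,i}|$ flowing L-edges carries exactly $\alpha_i/(g|W_{y,i}|)$ units of $p_y$. Summing $\ell\, p_y(e)^2 / \beta_i$ over these edges gives a positive contribution at most $\ell \alpha_i^2 / (\beta_i g |W_{y,i}|)$. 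Choosing $\beta_i = \alpha_i / \sqrt{g d |V_i| |W_{y,i}|}$ balances the two contributions of block $i$, each becoming $\ell \alpha_i \sqrt{d |V_i| / (g |W_{y,i}|)}$.

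Summing over blocks, both $C_0^{w'}(E)$ and $C_{1,y}^{w'}(E)$ are bounded by $\sum_i \ell \alpha_i \sqrt{d |V_i| / (g |W_{y,i}|)}$; this bound is independent of $y$, so $C^{w'}(E) = \sqrt{C_0^{w'}(E)\, C_1^{w'}(E)}$ inherits it as well. To pass from a sum to a maximum I use that $p_y$ is a unit flow sourced at the root, and that $\{V_i^+\}_i$ sits across a single consecutive-level cut of $\G$, so flow conservation gives $\sum_i \alpha_i \le 1$. A weighted-average bound then yields
$$ C^{w'}(E) \;\le\; \left( \sum_i \alpha_i \right) \max_i \ell \sqrt{\frac{d |V_i|}{g |W_{y,i}|}} \;\le\; \max_i \ell \sqrt{\frac{d}{g}\,\frac{|V_i|}{|W_{y,i}|}}, $$
which is the claim. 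The main delicate point is identifying the balancing weights $\beta_i$ that are simultaneously valid across all blocks and all inputs $y$; once conditions 1 and 2 are used to make $\beta_i$ genuinely $y$-independent, the remaining estimates are direct applications of \defref{e:lg_cost} followed by the weighted-average step.
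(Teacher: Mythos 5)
Your argument is correct and is essentially the standard proof of this lemma: the paper itself only cites it from~\cite{LMS11} without proof, and the block-constant reweighting $\beta_i = \alpha_i/\sqrt{gd\,|V_i|\,|W_{y,i}|}$ that balances $C_0$ and $C_{1,y}$ on each $V_i^+$, followed by the $\sum_i \alpha_i \le 1$ weighted-average step, is exactly the intended route. The only cosmetic caveat is that $\beta_i>0$ requires $\alpha_i>0$ (equivalently $W_{y,i}\neq\emptyset$), which is already implicit in the statement since $|W_{y,i}|$ appears in the denominator of the bound.
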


We will refer to $\max_i |V_i|/|W_{y,i}|$ as the {\em maximum vertex ratio}.
For the most part we will deal with the problem of detecting a (possibly directed and colored) subgraph in an $n$-
vertex graph.  We will be interested in symmetries induced by permuting the
elements of $[n]$, as such permutations do not change the property of containing a fixed subgraph.
We now state two additional lemmas from~\cite{LMS11} that use this symmetry to help establish the hypotheses of \lemref{e:simple_costg}.

For $\sigma \in S_n$, we define and also denote by $\sigma$
the permutation over $[n] \times [n]$ such that $\sigma (i,j) = (\sigma(i), \sigma(j) )$.  Recall that each $L$-vertex $u$ is labeled by a $k$-partite graph on 
$[n]$, say with color classes $A_1, \ldots, A_k$, and that we identify an $L$-vertex with its label.  For 
$\sigma \in S_n$ we define the action of $\sigma$ on $u$ as $\sigma(u) = v$, where $v$ is a $k$-partite graph 
with color classes $\sigma(A_1), \ldots, \sigma(A_k)$ and edges $\{\sigma(i), \sigma(j)\}$ for every edge $\{i,j\}$ in $u$.  

Define an equivalence class $[u]$ of $L$-vertices by $[u] = \{ \sigma(u): \sigma \in S_n\}$.  We say that $S_n$ {\em acts 
transitively} on flows $\{p_y\}$ if for every $y, y'$ there is a $\tau \in S_n$ such that $p_y((u,v))=p_{y'}((\tau(u), \tau(v))$ 
for all $L$-edges $(u,v)$.

The following lemma from~\cite{LMS11} shows that if $S_n$ acts transitively on a set of flows $\{p_y\}$ then they are 
consistent with 
${[v]}^+$, where $v$ is a vertex at the beginning of a stage between consecutive levels. 
This will set us up to satisfy hypothesis~(1) of \lemref{e:simple_costg}.
\begin{lemma}[\cite{LMS11}]
\label{lem:trans_cons}
Consider a learning graph $\G$ and a set of flows $\{p_y\}$ such that $S_n$ acts transitively on $\{p_y\}$.  
Let $V$ be the set of $L$-vertices of $\G$ at some given level.
Then $\{p_y\}$ is consistent with $\{[u]^+ : u \in V\}$, and, similarly, $\{p_y\}$ is consistent with $\{[u]^- : u \in V\}$.
\end{lemma}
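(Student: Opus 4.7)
The plan is to unpack the definition of consistency and use the transitive action of $S_n$ on $\{p_y\}$ to show that the value $p_y([u]^+)$ does not depend on $y$. By definition we need to prove that $p_y([u]^+) := \sum_{(v,w) \in [u]^+} p_y((v,w))$ is the same for all $y \in Y$. Fixing any two inputs $y,y' \in f^{-1}(1)$, transitivity gives a permutation $\tau \in S_n$ such that $p_y((v,w)) = p_{y'}((\tau(v),\tau(w)))$ for every $L$-edge $(v,w)$.

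The core of the argument is then a change of variables. First I would observe that $\tau$ permutes the equivalence class $[u]$, since $[u]$ is by definition an orbit of the $S_n$-action on $L$-vertices. Second, the transitivity identity implicitly requires that whenever $(v,w) \in \E$ we also have $(\tau(v),\tau(w)) \in \E$ (otherwise the right-hand side would not be defined), so the map $(v,w) \mapsto (\tau(v),\tau(w))$ is a bijection from $\E$ to $\E$. Combining these two facts, this bijection restricts to a bijection from $[u]^+$ to $[u]^+$: if $v \in [u]$ then $\tau(v) \in [u]$, and outgoing $L$-edges at $v$ are sent to outgoing $L$-edges at $\tau(v)$.

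With this bijection in hand, the calculation is immediate:
\begin{equation*}
p_y([u]^+) \;=\; \sum_{(v,w)\in [u]^+} p_y((v,w)) \;=\; \sum_{(v,w)\in [u]^+} p_{y'}((\tau(v),\tau(w))) \;=\; \sum_{(v',w')\in [u]^+} p_{y'}((v',w')) \;=\; p_{y'}([u]^+),
\end{equation*}
where the third equality reindexes by the bijection above. Since $y,y'$ were arbitrary, this establishes consistency of $\{p_y\}$ with $\{[u]^+ : u \in V\}$. The argument for $\{[u]^-\}$ is identical, using the bijection induced by $\tau$ on incoming $L$-edges.

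The only step requiring any care is the justification that $\tau$ induces a bijection on $\E$; this is essentially built into the definition of the $S_n$-action on flows, but should be stated explicitly. There is no real obstacle here beyond being careful with the reindexing — the lemma is ultimately a symmetry/averaging statement and goes through by direct substitution once the orbit-preservation is noted.
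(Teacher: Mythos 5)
Your proof is correct: the change-of-variables argument via the bijection $(v,w)\mapsto(\tau(v),\tau(w))$ on $\E$, which preserves the orbit $[u]$ and hence maps $[u]^+$ onto $[u]^+$ (and $[u]^-$ onto $[u]^-$), is exactly the standard symmetry argument for this lemma. Note that the paper itself only cites this statement from~\cite{LMS11} without reproducing a proof, but your argument is the same direct substitution/reindexing used there, including the one point genuinely needing care (that $\tau$ sends $L$-edges to $L$-edges), which you flag explicitly.
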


The next lemma gives a sufficient condition for hypothesis~(2) of \lemref{e:simple_costg} to be satisfied.  
The partition of vertices in \lemref{e:simple_costg} will be taken according to the equivalence classes $[u]$. 
 
\begin{lemma}[\cite{LMS11}]
\label{e:symmetry}
Consider a learning graph and a set of flows $\{p_y\}$ such that $S_n$ acts transitively on $\{p_y\}$.  
Suppose that for every $L$-vertex $u$ and flow $p_y$ such that $p_y(u^{-}) > 0$,
\begin{enumerate}
  \item 
   the flow from $u$ is uniformly directed to $g^+([u])$ many neighbors,
  \label{esh2}
  \item 
  for every $L$-vertex $w$, 
  the number of incoming edges with from $[w]$ to $u$ is $g^-([w], [u])$.  
  \label{esh3}
\end{enumerate}
Then for every $L$-vertex $u$ the flow entering $[u]$ is uniformly distributed over $W_{y,[u]} \subseteq [u]$ where 
  $|W_{y, [u]}|$ is independent of $y$.
\end{lemma}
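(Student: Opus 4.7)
The plan is to prove both statements simultaneously by induction on the level $d$ of $u$ in $\G$. The base case $d=0$ is the root, which receives unit flow under every $p_y$, so $W_{y,[r]} = \{r\}$ and uniformity is trivial.

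For the inductive step at level $d$, I would first handle the cardinality $|W_{y,[u]}|$ via transitivity. Given any $y,y'$, pick $\sigma \in S_n$ with $p_{y'} = \sigma\cdot p_y$. Since the $L$-graph structure is $S_n$-equivariant, $p_{y'}(\sigma(v)^-) = p_y(v^-)$ for every $L$-vertex $v$, and since $[u]$ is an $S_n$-orbit, $\sigma$ restricts to a bijection $W_{y,[u]} \to W_{y',[u]}$. Hence $|W_{y,[u]}|$ is independent of $y$.

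Next I would compute $p_y(u^-)$ for $u \in W_{y,[u]}$ using the inductive hypothesis at level $d-1$. For each level-$(d{-}1)$ class $[w]$, $|W_{y,[w]}| = k_{[w]}$ is $y$-independent and each $w \in W_{y,[w]}$ has $p_y(w^-) = p_y([w]^-)/k_{[w]}$; by \lemref{lem:trans_cons} this number $c_{[w]}$ is itself $y$-independent. Flow conservation together with hypothesis~(\ref{esh2}) then implies that every active $L$-edge leaving such a $w$ carries the same flow $\alpha_{[w]} := c_{[w]}/g^+([w])$, so
$$p_y(u^-) = \sum_{[w]} \alpha_{[w]}\, M_y(u,[w]),$$
where $M_y(u,[w])$ denotes the number of $p_y$-active $L$-edges from $[w]$ into $u$.

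The key remaining step, and the one I expect to be the main obstacle, is to show that $M_y(u,[w])$ depends only on $[u]$ and $[w]$, not on the particular $u \in W_{y,[u]}$ or on $y$. Hypothesis~(\ref{esh3}) asserts that the underlying bipartite $L$-graph between $[w]$ and $[u]$ is right-regular of degree $g^-([w],[u])$; moreover, the subgraph of $p_y$-active $L$-edges is $S_n$-equivariant across the orbit $\{p_y\}$, has left support exactly $W_{y,[w]}$ and right support exactly $W_{y,[u]}$, and its total number of edges is $y$-independent by a second application of \lemref{lem:trans_cons} to the stage $[w]^+ \cap [u]^-$. I would then argue, by a careful double-counting argument exploiting $S_n$-transitivity, that this active subgraph is in fact right-regular on $W_{y,[u]}$, so that $M_y(u,[w])$ is the claimed constant. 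The difficulty lies in turning symmetry of the whole orbit of flows into per-vertex balance within a single $p_y$; a naive averaging only bounds $\sum_{u} M_y(u,[w])$, so the structural hypotheses (\ref{esh2}) and (\ref{esh3}) must be invoked at the level of individual edges. Once this is established, $p_y(u^-)$ is constant on $W_{y,[u]}$ and the induction closes.
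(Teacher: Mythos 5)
First, note that the paper does not prove this lemma: it is imported verbatim from~\cite{LMS11}, so there is no in-paper proof to compare against and your proposal has to stand on its own.

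Your overall induction is the right skeleton (the base case, the $y$-independence of $|W_{y,[u]}|$ via the bijection $\sigma\colon W_{y,[u]}\to W_{y',[u]}$, and the formula $p_y(u^-)=\sum_{[w]}\alpha_{[w]}M_y(u,[w])$ from flow conservation and hypothesis~(1) are all fine), but there is a genuine gap at exactly the step you flag as the main obstacle, and it stems from a misreading of hypothesis~(2). You interpret it as right-regularity of the \emph{underlying} bipartite $L$-graph between $[w]$ and $[u]$, and then try to deduce that the subgraph of $p_y$-\emph{active} edges is right-regular on $W_{y,[u]}$ from $S_n$-transitivity of the orbit of flows. That deduction cannot work in general: transitivity only relates $p_y$ to $p_{y'}$ via some $\sigma$, i.e.\ it constrains the family of flows across inputs, not the distribution of a single flow $p_y$ within one equivalence class. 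The stabilizer $\{\tau : p_y = \tau\cdot p_y\}$ need not act transitively on $W_{y,[u]}$, so, as you yourself observe, averaging only controls $\sum_{u}M_y(u,[w])$ and per-vertex balance does not follow. The resolution is that hypothesis~(2) (read in light of how it is instantiated in the proof of the paper's main theorem, where $g^-_{x,R}(P)$ counts incoming edges \emph{with positive flow} from the class of $R$) is precisely the assumption that $M_y(u,[w])=g^-([w],[u])$ for every $u$ with $p_y(u^-)>0$; it is a hypothesis, not something to be derived. With that reading, your formula gives $p_y(u^-)=\sum_{[w]} g^-([w],[u])\,p_y([w]^-)/(|W_{y,[w]}|\,g^+([w]))$, which is independent of the particular $u\in W_{y,[u]}$ and of $y$ (using \lemref{lem:trans_cons} and the inductive hypothesis), and the induction closes with no double-counting argument needed.
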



\section{Triangle algorithm}
\label{sec:triangle}

\begin{figure}
\centerline{\includegraphics[scale=0.33]{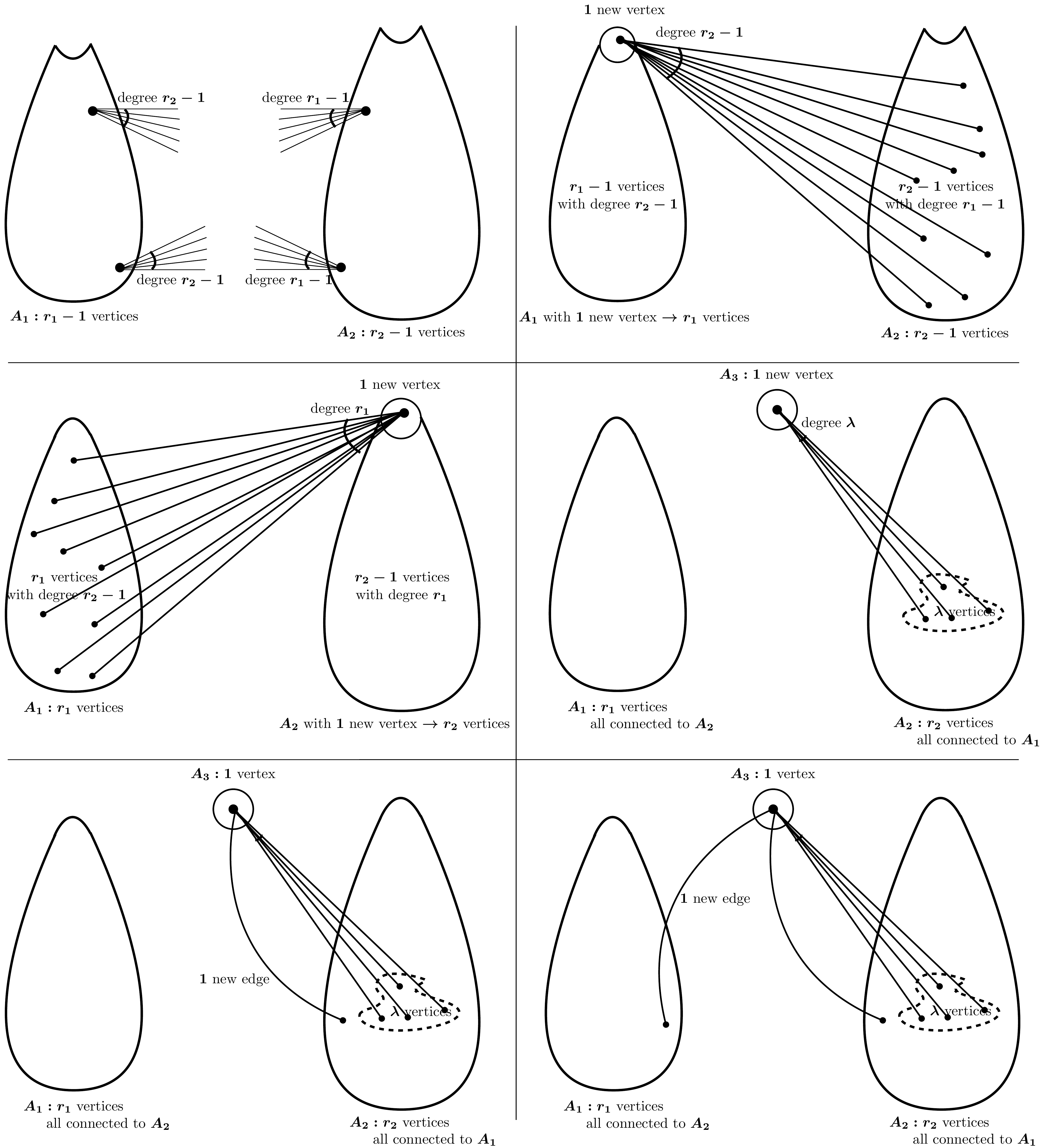}}
\caption{Stages 1-6 
for Triangle Algorithm\label{fig:triangle}} 
\end{figure}

\begin{theorem}\label{thm:triangle}
There is a bounded-error quantum query algorithm for detecting if an $n$-vertex graph contains a triangle making 
$O(n^{9/7})$ many queries.
\end{theorem}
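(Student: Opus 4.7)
The plan is to construct an explicit learning graph for triangle detection of complexity $O(n^{9/7})$ and then invoke \thmref{t:lg_alg}. Following the scheme sketched in the introduction, L-vertices are labeled by tuples $(A_1, A_2, M, F)$ in which $A_1, A_2 \subseteq [n]$ are disjoint subsets of sizes $r_1$ and $r_2$ (to be fixed at the end), $M$ records which of the triangle-role names $\{a_1, a_2, a_3\}$ have been committed to and to which vertices, and $F$ is the set of edge-slots queried so far. The six stages shown in Figure~\ref{fig:triangle} are:
(1) \emph{setup} --- query all $r_1 r_2$ edge slots between $A_1$ and $A_2$;
(2) \emph{load $a_1$} --- commit to a vertex $a_1 \in A_1$;
(3) \emph{load $a_2$} --- commit to a vertex $a_2 \in A_2$;
(4) \emph{confirm $(a_1, a_2)$} --- recognize this edge as already lying in $F$, with no new queries;
(5) \emph{load $(a_1, a_3)$} --- query the edge between $a_1$ and a candidate $a_3 \notin A_1 \cup A_2$;
(6) \emph{load $(a_2, a_3)$} --- query the remaining edge of the triangle.

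For each positive input $y$ with triangle $\{v_1, v_2, v_3\}$ I would distribute the flow $p_y$ uniformly over all L-paths obtained by (i) averaging over the six orderings that assign $(v_1, v_2, v_3) \mapsto (a_1, a_2, a_3)$, and (ii) choosing $A_1$ and $A_2$ uniformly among disjoint size-$r_1$ and size-$r_2$ subsets of $[n]\setminus\{a_3\}$ that respectively contain $a_1$ and $a_2$. Since triangle detection is invariant under $S_n$, the family $\{p_y\}$ is $S_n$-transitive, so \lemref{lem:trans_cons} gives the consistency needed for hypothesis~(1) of \lemref{e:simple_costg}, and \lemref{e:symmetry} supplies hypotheses~(2) and~(3) provided the per-L-vertex outgoing degrees can be made independent of $y$. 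Applying \lemref{e:simple_costg} stage by stage and combining via \lemref{l:stages}, the dominant contributions are $r_1 r_2$ (setup), $\sqrt{n}\,r_2$ and $n\sqrt{r_1}$ (loading $a_1$ and $a_2$), and $n^{3/2}/(r_1 r_2)^{1/6}$ (the combined effect of stages (5) and (6), where the outgoing flow concentrates on a fraction $(r_1 r_2)^{1/3}/n$ of candidate $a_3$'s via a graph-collision argument).

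Solving the resulting min-max over $r_1, r_2$, the choice $r_1 = n^{4/7}$ and $r_2 = n^{5/7}$ balances every dominant term at $O(n^{9/7})$, and \thmref{t:lg_alg} converts the learning-graph bound into a quantum query algorithm of the same cost. The main obstacle I expect is verifying hypothesis~(3) of \lemref{e:simple_costg} for the graph-collision stage: the outgoing flow at every marked L-vertex $u$ must be uniform over exactly $g^+([u])$ neighbors, with $g^+([u])$ depending only on the equivalence class $[u]$ and not on $y$. This forces a careful symmetric reweighting of the L-edges at stages~(5) and~(6) so that the outgoing degrees match across all $S_n$-orbits, and it is precisely this reweighting that produces the $(r_1 r_2)^{1/6}$ savings responsible for the improvement over Belovs' $n^{35/27}$ bound.
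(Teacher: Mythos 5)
Your overall architecture---the unbalanced complete bipartite database between $A_1$ and $A_2$ with $r_1=n^{4/7}$, $r_2=n^{5/7}$, the stage decomposition, and the appeal to \lemref{l:stages} and \thmref{t:lg_alg}---matches the paper, and the costs you quote for the setup and for loading $a_1$ and $a_2$ are correct. The gap is in how the final stages reach $n^{3/2}/(r_1r_2)^{1/6}$. As you describe them, stage~(5) loads the single edge $\{a_1,a_3\}$ and stage~(6) the single edge $\{a_2,a_3\}$. Run \lemref{e:simple_cost} on your stage~(6): at its start, flow sits only on those $L$-vertices in which $a_3$ is the chosen candidate and $a_1$ is the \emph{unique} vertex of $A_1$ joined to it, so the vertex ratio is of order $(n/r_1)(n/r_2)\cdot n\cdot r_1=n^3/r_2$, the degree ratio is $r_2$, and the cost is $\sqrt{n^3}=n^{3/2}$---no better than Grover. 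The loaded triangle edges make $a_1$ (and symmetrically $a_2$) uniquely identifiable inside its set, and that identifiability is exactly what blows up the positive complexity. Attributing the $(r_1r_2)^{1/6}$ saving to ``a careful symmetric reweighting of the $L$-edges'' cannot repair this: in the paper's construction every $L$-edge has weight one and the only reweighting is the stage-balancing of \lemref{l:stages}; reweighting does not change which $L$-vertices carry flow. Likewise, ``the flow concentrates on a fraction $(r_1r_2)^{1/3}/n$ of candidate $a_3$'s'' has no meaning in the learning-graph formalism---for a fixed positive input and fixed triangle the flow selects a single $a_3$---so the graph-collision subroutine of the quantum-walk algorithm does not translate the way you suggest.

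The missing idea is a \emph{hiding} step, which appears as an extra level of the learning graph. In the paper, $a_3$ is loaded together with $\lambda$ decoy edges to $A_2$ (a bipartite graph of type $\{(1,\lambda)\},\{(\lambda,1),(r_2-\lambda,0)\}$, with $\{a_2,a_3\}$ deliberately excluded); then $\{a_2,a_3\}$ is added as one more edge among $r_2-\lambda$ possibilities; only then is $\{a_1,a_3\}$ added. Because $\{a_2,a_3\}$ is now one of $\lambda+1$ indistinguishable edges at $a_3$, the vertex ratio entering the last stage is only $(n/r_1)(n/r_2)\,n\,(r_2/\lambda)$, and the three stages cost $n^{3/2}\lambda/\sqrt{r_1r_2}$, $n^{3/2}/\sqrt{r_1}$ and $n^{3/2}/\sqrt{\lambda}$; choosing $\lambda=(r_1r_2)^{1/3}=n^{3/7}$ balances these at the $n^{3/2}/(r_1r_2)^{1/6}=n^{9/7}$ you quote. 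Note also that the order matters: the paper hides into $A_2$ and loads $\{a_2,a_3\}$ \emph{before} $\{a_1,a_3\}$, whereas your order (a bare edge $\{a_1,a_3\}$ first) would require hiding on the $A_1$ side instead. Without this explicit extra stage the claimed bound is not established.
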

\begin{proof}
We will show the theorem by giving a learning graph of the claimed complexity, which is sufficient by 
\thmref{t:lg_alg}.  We will define the learning graph 
by stages; let $V_t$ denote the $L$-vertices of the learning graph present at the beginning of stage $t$.  The $L$-edges 
between $V_t$ and $V_{t+1}$ are defined in the obvious way---there is an $L$-edge between $v_t \in V_t$ and 
$v_{t+1} \in V_{t+1}$ if the graph labeling $v_t$ is a subgraph of the graph labeling $v_{t+1}$, and all such 
$L$-edges have weight one.  
The root of the learning graph is labeled by the empty graph.

For a positive input graph $G$, let $a_1, a_2, a_3$ be the vertices of a triangle of $G$.  The algorithm 
(see Figure~\ref{fig:triangle})
depends on set size parameters $r_1, r_2 \in [n], r_1, r_2 = o(n)$ and a vertex degree parameter 
$\lambda \in [n]$ that will be optimized later.  We will choose $r_1 < r_2$ such that $r_2/ r_1$ is an integer.  
The cost of each stage will be upper bounded using \lemref{e:simple_cost}.

\paragraph{Stage 1 (Setup):}
The initial level $V_1$ consists 
of the root of the learning graph labeled by the empty graph. 
The level $V_2$ consists of all $L$-vertices labeled by a complete unbalanced 
bipartite graph with disjoint color classes $A_1, A_2 \subseteq [n]$ where $|A_1|= r_1-1$ and $|A_2| = r_2-1$ and 
$r_1 \le r_2$.   
Flow is uniform from the root 
to all $L$-vertices such that 
$a_i \not \in A_1, a_i \not \in A_2$ for $i=1,2,3$. \\
\textbf{Cost:} The hypotheses of \lemref{e:simple_cost} hold trivially at this stage.  
The length of this stage is $O(r_1 r_2)$.  The vertex ratio is $1$, and the degree ratio is  
$$\frac{{n \choose r_1}{ n- r_1 \choose r_2}}{{n -3 \choose r_1}{n-r_1 -3 \choose r_2}} = O(1),$$
as $r_1, r_2 = o(n)$.  Thus the overall cost is $O(r_1 r_2)$.

\paragraph{Stage 2 (Load $a_1$):} During this stage we add a vertex to the set $A_1$ and connect it to all 
vertices in $A_2$.  Formally,
$V_3$ consists of all vertices labeled by a complete bipartite graph between color classes $A_1, A_2$ of sizes 
$r_1, r_2 - 1$, respectively. The flow goes uniformly to those $L$-vertices where $a_1$ is the vertex added to $A_1$.  \\
\textbf{Cost:}  By the definition of stage 1, the flow is uniform over $L$-vertices at the beginning of stage 2.  
The out-degree of every $L$-vertex in $V_1$ is $n-r_1-r_2+2$.  Of these, in $L$-vertices with flow, exactly one 
edge is taken by the flow.  Thus we can apply \lemref{e:simple_cost}.
Since the degree ratio was $O(1)$ for the first stage,
 the vertex ratio is also $O(1)$ for this stage.  The length is $r_2-1$.  The degree ratio is $O(n)$.
Thus the cost of this stage is $O(\sqrt{n} r_2)$.

\paragraph{Stage 3 (Load $a_2$):} We add a vertex to $A_2$ and connect it to all of the $r_1$ many vertices 
in $A_1$.  Thus the $L$-vertices at the end of stage $3$ consist of all complete bipartite graphs between sets 
$A_1, A_2$ of sizes $r_1, r_2$, respectively.
The flow goes uniformly to those $L$-vertices where $a_2$ is added at this stage to $A_2$.  Note that since we 
work with a complete bipartite graph, 
if $a_1 \in A_1$ and $a_2 \in A_2$ then the edge $\{a_1, a_2\}$ is automatically present.\\
\textbf{Cost:} The amount of flow in a vertex with flow at the beginning of stage 3 is the same as at the beginning 
of stage 2, as the flow out-degree in stage 2 was one and there was no merging of flow.  Thus flow is still uniform 
at the beginning of stage 3.  The out-degree of each $L$-vertex is $n-r_1-r_2+1$ and again for $L$-vertices with flow, 
the flow out-degree is exactly one.  Thus we can again apply \lemref{e:simple_cost}.

The length of this stage is $r_1$.  The vertex ratio is $O(n/r_1)$ as flow is present in $L$-vertices where $a_1$ is in the 
set $A_1$ of size $r_1$ (and such that $a_2, a_3$ are not loaded which only affects 
things by a $O(1)$ factor).  The degree ratio is again $O(n)$ as the flow only uses 
$L$-edges where $a_2$ is added out of $n-r_1-r_2+1$ possible choices.  Thus the cost of this stage is 
$O(\sqrt{n/r_1} \sqrt{n} r_1) = O(n \sqrt{r_1})$.

\paragraph{Stage 4 (Load $a_3$):}
We pick a vertex $v$ and $\lambda$ many edges connecting $v$ to $A_2$.  Thus the $L$-vertices at the 
end of stage~4 are labeled by edges that are the union of two bipartite graphs: a complete bipartite graph 
between $A_1, A_2$ of sizes $r_1, r_2$, and a bipartite graph between $v$ and $A_2$ of type 
$\{(1, \lambda)\}, \{(\lambda, 1), (r_2 - \lambda, 0)\}$.  
Flow goes uniformly to those $L$-vertices where $v=a_3$ and the edge $\{a_2, a_3\}$ is not loaded.\\
\textbf{Cost:}  Again the amount of flow in a vertex with flow at the beginning of stage 4 is the same as 
at the beginning of stage~3, as the flow out-degree in stage 3 was one and there was no merging of flow.  Thus 
the flow is still uniform.  The out-degree of $L$-vertices is $(n-r_1-r_2) {r_2 \choose \lambda}$, and the flow out-degree 
is ${r_2-1 \choose \lambda}$.  Thus we can again apply \lemref{e:simple_cost}.

The length of this stage is $\lambda$.  At the beginning of stage 4 flow is present in those 
$L$-vertices where $a_1 \in A_1, a_2 \in A_2$ and $a_3$ is not loaded.  Thus the vertex ratio is 
$O((n/r_1)(n/r_2))$.  Finally, the degree ratio is $O(n)$.  Thus the overall cost of this stage is
$$
O\left(\sqrt{\frac{n}{r_1}}\sqrt{\frac{n}{r_2}} \sqrt{n} \lambda\right) = O\left(\frac{n^{3/2} \lambda}{\sqrt{r_1 r_2}}\right).
$$

\paragraph{Stage 5 (Load $\{a_2,a_3\}$):} We add one new edge between $v$ and $A_2$.  Thus the $L$-vertices 
at the end of this stage will be labeled by the union of edges in two bipartite graphs:
a complete bipartite graph between $A_1, A_2$ of sizes $r_1, r_2$, and the second between 
$v$ and $A_2$ of type $\{(1, \lambda+1)\}, \{(\lambda+1, 1), (r_2 - \lambda-1, 0)\}$.  
Flow goes uniformly along those $L$-edges where the edge added is $\{a_2, a_3\}$.\\
\textbf{Cost:} The flow is uniform at the beginning of this stage, as it was uniform at the beginning of 
stage 4, the flow out-degree was constant in stage 4, and there was no merging of flow.  Each $L$-vertex has 
out-degree $r_2 - \lambda$ and the flow-outdegree is one.  Thus we can again apply \lemref{e:simple_cost}.

The length of this stage is one.  The vertex ratio is $O((n/r_1)(n/r_2)n)$ as flow is present in a 
constant fraction of those $L$-vertices where $a_1 \in A_1, a_2 \in A_2$ and $v=a_3$.  
The degree ratio is $r_2-\lambda$, as there are this many possible edges to add and the flow uses one.  
Thus the overall cost of this stage is 
$$
O\left(\sqrt{\frac{n}{r_1}}\sqrt{\frac{n}{r_2}} \sqrt{n} \sqrt{r_2}\right) =O\left(\frac{n^{3/2}}{\sqrt{r_1}}\right).
$$

\paragraph{Stage 6 (Load $\{a_1, a_3\}$):}
We add one new edge between $v$ and $A_1$.  Thus the $L$-vertices at the end of this stage will be labeled by 
the union of three bipartite graphs between $A_1, A_2$ and $v, A_2$ as before, and additionally between 
$v, A_1$ of type $\{(1,1)\}, \{(1,1), (r_1-1, 0)\}$.    Flow goes uniformly on those $L$-edges where $\{a_1, a_3\}$ is 
added.\\
\textbf{Cost:} Again flow is uniform as it was at the beginning of stage 5, the flow out-degree was constant and there was 
no merging.  Each $L$-vertex has out degree $r_1$ and the flow out-degree is one.  Thus we can again apply \lemref{e:simple_cost}.

The length of this stage is one. The vertex ratio is $O((n/r_1)(n/r_2)n(r_2/\lambda))$ 
as flow is present in a constant fraction 
of those $L$-vertices where $a_1 \in A_1, a_2 \in A_2, v=a_3$ and $\{a_2, a_3\}$ is present.  The degree ratio is 
$r_1$.  Thus the overall cost of this stage is
$$
O\left(\sqrt{\frac{n}{r_1}}\sqrt{\frac{n}{r_2}} \sqrt{n}\sqrt{\frac{r_2}{\lambda}} \sqrt{r_1}\right)= 
O\left(\frac{n^{3/2}}{\sqrt{\lambda}}\right).
$$

By choosing $r_1 = n^{4/7}, r_2 = n^{5/7}, \lambda = n^{3/7}$ we can make all costs, and thus their sum, 
$O(n^{9/7})$. 
\end{proof}

To quickly compute the stage costs, it is useful 
to associate to each stage a {\em local cost} and {\em global cost}.  The local cost is the product of the square root of 
the degree ratio and the length of a stage.  The global cost is the square root of the factor by which the stage 
increases the vertex ratio---we call this a global cost as it is propagated from one stage to the next.  Thus the 
square root of the vertex ratio at stage $t$ will be given by the product of the global costs of stages $1, \ldots, t-1$.  
As the cost of each stage is the product of the 
square root of the vertex ratio, square root of the degree ratio, and length, it can be computed by multiplying the local 
cost of the stage with the product of the global costs of all previous stages.
\begin{center}
\begin{tabular}{|r|c c c c c c|}
\hline
Stage & 1 & 2 & 3 & 4 & 5 & 6\\
\hline
Global cost & $1$ & $\sqrt{{n}/{r_1}}$ & $\sqrt{{n}/{r_2}}$  &
$\sqrt{n}$ & $\sqrt{{r_2}/{\lambda}}$ & \\
Local cost & $r_1 r_2$ & $\sqrt{n} r_2$ & $\sqrt{n} r_1$ & 
$\sqrt{n}\lambda$ & $\sqrt{r_2}$  & $\sqrt{r_1}$ \\
\hline
Cost & $r_1 r_2$ & $\sqrt{n} r_2$ & $n \sqrt{r_1}$ &
$n^{3/2} \lambda / \sqrt{r_1 r_2}$ & $n^{3/2} / \sqrt{r_1}$ & $n^{3/2}/\sqrt{\lambda}$ \\
\hline
Value & $n^{9/7}$ & $n^{17/14}$ & $n^{9/7}$  & $n^{9/7}$ & $n^{17/14}$ & $n^{9/7}$ \\
\hline
\end{tabular}
\end{center}

\section{An abstract language for learning graphs}
\label{sec:general}
 In this section we develop a high-level language for designing algorithms to detect constant-sized subgraphs, and 
 more generally to compute functions $f: [q]^{n \times n} \rightarrow \B$ with constant-sized $1$-certificate complexity. 
 This high-level language consists of commands like ``load a vertex'' or ``load an edge'' that makes the 
 algorithm easy to understand.  Our main theorem, \thmref{thm:main}, compiles this high-level language into a 
 learning graph and bounds the complexity of the resulting quantum query algorithm.  After the theorem is proven, we 
 can design quantum query algorithms using only the high-level language, without reference to learning graphs.  
 This saves the algorithm designer from having to make many repetitive arguments as in \secref{sec:triangle}, and 
 also allows computer search to find the best algorithm within our framework.

\subsection{Special case: subgraph containment}
We now give an overview of our algorithmic framework and its implementation in learning graphs.  We first 
use the framework for computing the function $f_H : [2]^{n \choose 2} \rightarrow \B$, which is by definition $1$ if 
the undirected $n$-vertex input graph contains a copy of some fixed $k$-vertex graph $H=([k], E(H))$ as a subgraph.
This case contains all the essential ideas; after showing this,  it will be easy to generalize 
the theorem in few more steps to any function
$f: [q]^{n \choose 2} \rightarrow \B$ or
 $f: [q]^{n \times n} \rightarrow \B$ with constant-sized $1$-certificate
complexity.

Fix a positive instance $x$, and vertices $a_1, \ldots, a_k \in [n]$ constituting a copy of $H$ in $x$, that is, 
such that $x_{\{a_i, a_j\}}=1$ for all $\{i,j\} \in E(H)$.  Vertices of the learning graph 
will be labeled by $k$-partite graphs with color classes $A_1, \ldots, A_k$. The sets $A_1, \ldots, A_k$ are allowed to 
overlap. Each $L$-vertex label will contain an undirected bipartite graph 
$G_{\{i,j\}} = (A_{\min\{i,j\}}, A_{\max\{i,j\}}, E_{\{i,j\}})$ for every edge $\{i,j\} \in E(H)$,
where $E_{\{i,j\}} \subseteq A_{\min\{i,j\}} \times A_{\max\{i,j\}}$. 
For $\{i,j\} \in E(H)$, 
by $\{a_i, a_j\}$ we mean $(a_i, a_j)$ if
$i <j$, and $(a_j, a_i)$ if $j<i$.
For an edge $\{i,j\} \in E(H)$, and $u \in [n]$, the degree of $u$ in $G_{ij}$ {\em towards} $A_j$ is
the number of vertices in $A_j$ connected to $u$ if $u \in A_i$, and is 0 otherwise.
The edges of these bipartite graphs define naturally the input edges
formally required in the definition of the learning graph:  for $u \neq v$,
both $(u,v)$ and $(v,u)$  define the input edge $\{u,v\}$. We will disregard multiple input edges
as well as self loops corresponding to edges $(u,u)$. Observe that various $L$-vertex labels may correspond to the
same set of input edges. For the ease of notation we will denote $G_{\{i,j\}}$ by both $G_{ij}$ and $G_{ji}$.
We will use similar convention for $E_{\{i,j\}}$ which will be denoted by both $E_{ij}$ and $E_{ji}$. 

Our high-level language consists of three types of commands.  
The first is a setup command.  
This is implemented by choosing sets 
$A_1, \ldots, A_k \subseteq [n]$ of sizes $r_1, \ldots, r_k$ and 
bipartite graphs $G_{ij}$ between $A_i$ and $A_j$ for all $\{i,j\} \in E(H)$.  
Both the set sizes $r_1, \ldots, r_k$ and the 
average degree of vertices in the bipartite graph between $A_i$ and $A_j$ are parameters of 
the algorithm.  The degree parameter $d_{ij}= d_{ji}$ represents the average 
degree of vertices in the smaller of $A_i, A_j$ towards the bigger one in $G_{ij}$. 
It is defined in this fashion so that it is always an integer and at least  one---the average degree of the larger of 
$A_i, A_j$ can be less than one.  
Without loss of generality there is only one setup step and it happens at the beginning of the algorithm.  

The other commands allowed are to load a vertex $a_i$ and to load an edge $\{a_i, a_j\}$ corresponding to 
$\{i,j\} \in E(H)$ (this terminology was introduced by Belovs).  There are two regimes for loading an edge.  One is the dense case, where all vertices in the graph 
$G_{ij}$ have a neighbor; the other is the sparse case, where some vertices in the larger of 
$A_i, A_j$ have no neighbors in the smaller.  We need to separate these two cases as they apparently have 
different costs (and cost analyses).  The algorithm is defined by a choice of set sizes and degree parameters, and 
a loading schedule giving the order in which the vertices and edges are loaded and which loads all edges of~$H$.  

We now define the parameters specifying an algorithm more formally.
\begin{Definition}[Admissible parameters]
Let $H=( [k], E(H))$ be a $k$-vertex graph,
$r_1, \ldots, r_k \in [n]$ be set size parameters, and $d_{ij} \in [n]$ for $\{i,j\} \in E(H)$ be degree parameters.  
Then $\{r_i\}, \{d_{ij}\}$ are {\em admissible} for $H$ if 
\begin{itemize}
\item 
 $1\le r_i \le n/4$ for all $i \in [k]$, 
\item 
 $\ 1 \le d_{ij} \le \max\{r_i, r_j\}$ for all $\{i,j\} \in E(H)$,  
\item 
 for all $i$ there exists $j$ such that $\{i,j\} \in E(H)$ and $d_{ij} (2r_j+1)/(2r_i+1) \ge 1$.  
\end{itemize}
\end{Definition}
We give a brief explanation of the purpose of each of these conditions.  We will encounter terms of the form
${n \choose r_i}/ {n-k \choose r_i}$ that we wish to be $O(1)$; this is ensured by the first condition.  As $d_{ij}$ 
represents the average degree of the vertices in the smaller of $A_i, A_j$ towards the larger, 
the second condition states that this degree cannot 
be larger than the number of distinct possible neighbors.  The third item ensures that the average degree of 
vertices in $A_i$ is at least one in the bipartite graph with some $A_j$.

\begin{Definition}[Loading schedule]
Let $H=( [k], E(H))$ be a $k$-vertex graph with  $m$ edges.  A loading schedule for 
$H$ is a sequence $S=s_1 s_2 \ldots s_{k+m}$ whose 
elements $s_i \in [k]$ or $s_i \in E(H)$ are vertex labels or edge labels of $H$ such that an edge $\{i,j\}$ only appears 
in $S$ after $i$ and $j$, and $S$ contains all edges of $H$.  Let $\Vset_{t}$ be the set of vertices in $S$ before 
position $t$ and similarly $\Eset_{t}$ the set of edges in $S$ before position $t$.
\end{Definition}
We can now state the main theorem of this section.
\begin{theorem}
\label{thm:main}
Let $H=( [k], E(H))$ be a $k$-vertex graph.
Let $r_1, \ldots, r_k, d_{ij}$ be admissible parameters for $H$, and $S$ be a 
loading schedule for $H$.  Then the quantum query complexity of determining if an $n$-vertex graph contains $H$ 
as a subgraph is at most a constant times the maximum of the 
following quantities:
\begin{itemize}
\item Setup cost: $$\sum_{\{u,v\} \in E(H)} \min\{r_u, r_v\} d_{uv},$$
\item Cost of loading $s_t= i$: 
$$\left(\prod_{u \in \Vset_t} \sqrt{\frac{n}{r_u}} \prod_{\{u,v\} \in \Eset_t} \sqrt{\frac{\max\{r_u, r_v\}}{d_{uv}}}\right)
\times  \sqrt{n} \left( \sum_{\substack{j: \{i,j\} \in E(H) \\ r_i \le r_j}} d_{ij} + 
 \sum_{\substack{j : \{i, j\} \in E(H) \\ r_i > r_j}} \frac{r_j d_{ij}}{r_i} \right),
$$
\item Cost of loading $s_t = \{i,j\}$ in the dense case where $(2 \min\{r_i, r_j\}+1) d_{ij} \ge (2 \max\{r_i, r_j\} +1)$: 
$$
\left(\prod_{u \in \Vset_t} \sqrt{\frac{n}{r_u}} \prod_{\{u,v\} \in \Eset_t} \sqrt{\frac{\max\{r_u,r_v\}}{d_{uv}}}\right)
\max\{r_i, r_j\},
$$
\item Cost of loading $s_t = \{i,j\}$ in the sparse case where $(2 \min\{r_i, r_j\}+1) d_{ij} < (2 \max\{r_i, r_j\} +1)$: 
$$
\left(\prod_{u \in \Vset_t} \sqrt{\frac{n}{r_u}} \prod_{\{u,v\} \in \Eset_t} \sqrt{\frac{\max\{r_u,r_v\}}{d_{uv}}}\right) \sqrt{r_i r_j}.
$$
\end{itemize} 
\end{theorem}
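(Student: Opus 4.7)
The plan is to realize the high-level language as a concrete learning graph with one stage per command in the loading schedule, plus the initial setup stage, and to bound the complexity of each stage using \lemref{l:stages} and \lemref{e:simple_costg}. At level $t$, L-vertices will be labeled by $k$-partite graphs: color classes $A_1,\ldots,A_k \subseteq [n]$ of designated sizes, together with bipartite subgraphs $G_{ij}$ of a designated type (\defref{def:type}) between $A_i$ and $A_j$ for each $\{i,j\}\in E(H)$. The sizes are $r_u$ for $u\in\Vset_t$ and zero for $u\notin\Vset_t$, and the $G_{ij}$ have average degree $d_{ij}$ on the smaller side, with one distinguished extra edge inserted when $\{i,j\}\in\Eset_t$. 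For each positive input $y$ certified by a copy $a_1,\ldots,a_k$ of $H$, I would initialize flow uniformly over L-vertices whose color classes avoid all $a_u$, and at each loading step push flow uniformly along the L-edges that load the correct $a_i$ or edge $\{a_i,a_j\}$. The construction is invariant under the action of $S_n$ on $[n]$, so the symmetric group acts transitively on $\{p_y\}$.

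To evaluate each stage I would invoke \lemref{lem:trans_cons} and \lemref{e:symmetry} to verify the hypotheses of \lemref{e:simple_costg}: transitivity gives consistency of $\{p_y\}$ with the equivalence-class partition $\{[u]^+\}$ (hypothesis~1); the uniform structure of each loading command, together with the counting of bipartite graphs of a prescribed type, yields the uniform incoming- and outgoing-flow conditions (hypotheses~2 and~3). The setup stage then contributes negative complexity $\sum_{\{i,j\}\in E(H)}\min\{r_i,r_j\}d_{ij}$, and admissibility of the parameters (in particular $r_i\le n/4$) makes both the vertex and degree ratios $O(1)$. For later stages, the vertex ratio accumulates a factor $n/r_u$ when $u$ is loaded and $\max\{r_u,r_v\}/d_{uv}$ when $\{u,v\}$ is loaded, mirroring exactly the global/local cost bookkeeping of \secref{sec:triangle}. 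For loading a vertex $s_t = i$, the degree ratio is $\Theta(n)$ (one picks $a_i$ among $\Theta(n)$ candidates) and the length is the number of incident edges to be inserted into the current bipartite graphs, which is precisely the sum appearing in the statement. For loading an edge, the length is one, and the degree ratio splits into the dense regime (every vertex of the larger side already has a neighbor, and the relevant count scales like $\max\{r_i,r_j\}^2$, yielding cost $\max\{r_i,r_j\}$) and the sparse regime (the count scales like $r_i r_j$, yielding $\sqrt{r_i r_j}$).

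The main obstacle I expect is the combinatorial accounting in the sparse edge-loading case. When $G_{ij}$ is sparse, its type is nontrivial: some vertices on the larger side have degree zero, and the equivalence class $[u]$ breaks into subclasses within a level according to whether the specific endpoints $a_i,a_j$ sit among the degree-zero or degree-positive vertices. One must carefully count, for each $(y,[u])$, the sizes of the subsets $W_{y,[u]}$ and the incoming/outgoing L-edge degrees $g^-,g^+$ required by \lemref{e:symmetry} before \lemref{e:simple_costg} can be applied, and the same care is needed when loading a vertex whose already-loaded incident bipartite graphs are sparse. Once this counting is in place and matches the expressions in the statement, combining the per-stage bounds through \lemref{l:stages} and taking the maximum gives the claimed complexity, and \thmref{t:lg_alg} converts it into the stated quantum query bound.
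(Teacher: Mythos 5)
Your overall architecture matches the paper's: one stage per command, $S_n$-symmetry via \lemref{lem:trans_cons} and \lemref{e:symmetry} to justify \lemref{e:simple_costg}, the same global/local bookkeeping for the vertex ratio, and \lemref{l:stages} plus \thmref{t:lg_alg} to finish. But there is a genuine gap in how you load an edge, and it is precisely the point where the construction (not just the counting) has to do something nontrivial. You propose that when $\{i,j\}\in\Eset_t$ the label carries ``one distinguished extra edge,'' i.e.\ you add $\{a_i,a_j\}$ directly on top of the regular bipartite graph $G_{ij}$. After that step $a_i$ and $a_j$ are the unique vertices of anomalous degree, so in every later stage the event $\sigma(P)\in F_x$ forces a random $\sigma$ to map the two distinguished endpoints exactly onto $a_i,a_j$. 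Conditioned on $a_i\in\sigma(A_i)$, $a_j\in\sigma(A_j)$, that happens with probability $\Theta(1/(r_ir_j))$ rather than the $\Omega(d_{ij}/\max\{r_i,r_j\})$ you need, so the vertex ratio accumulates a factor $r_ir_j$ instead of $\max\{r_i,r_j\}/d_{ij}$ and the stated bounds for all subsequent stages fail. The paper avoids this with a \emph{hiding} substage: before loading $\{a_i,a_j\}$ it adds a matching batch of extra edges (in Case~1, $h=(2g+1)/(2\ell+1)$ new neighbors to each of $\ell$ vertices of the smaller side; in Case~2, one new edge to each of $\ell$ vertices) so that after loading, $a_i$ and $a_j$ have the ``correct degree'' $\cd(i,j,t)$ shared by a constant fraction of their color classes. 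This is also why the dense edge-loading stage does not have length one as you assert: its dominant substage has length $h=O(\max\{r_i,r_j\}/\min\{r_i,r_j\})$ and degree ratio $O(\min\{r_i,r_j\}^2)$, whose product of length and square-root-of-degree-ratio is the claimed $\max\{r_i,r_j\}$; your ``length one, degree ratio $\max\{r_i,r_j\}^2$'' accounting lands on the right number by coincidence but does not describe a construction for which the later stages survive.

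You do gesture at the difficulty (``the equivalence class $[u]$ breaks into subclasses\dots one must carefully count''), but the resolution is not more careful counting within your construction --- it requires changing the construction. Two smaller omissions: the paper must first perturb the $r_i$ (and $d_{ij}$) so that the degree sequences of the prescribed bipartite types are integral (the $(2r+1)$ odd-ratio normalization and the integer $\lambda_{ij}$ of \eqnref{e:lambda}), and it must prove the flow-characterization lemma (\lemref{lem:vertexflow}) showing that membership in $F_x$ is equivalent to the three conditions on loaded vertices, loaded edges, and correct degrees --- the last of which only costs a constant factor \emph{because} of the hiding step.
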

If $\{i,j\}$ is loaded in the dense case we call it a type~1 edge, and if it loaded in the sparse case we call it a 
type~2 edge.
The costs of a stage given by \thmref{thm:main} can again be understood more simply in terms of local costs and global 
costs. We give the local and global cost for each stage in the table below.  
\begin{center} 
\begin{tabular}{|r| c c|}
\hline
Stage & Global Cost & Local Cost \\
\hline
Setup & 1 & $\sum_{\{u,v\} \in H} \min\{r_u, r_v\} d_{uv}$ \\
Load vertex $i$ & $\sqrt{n/r_i}$ & $\sqrt{n} \times \text{total degree of } i$ \\
Load a type 1 edge $\{i,j\}$ & $\sqrt{\max\{r_i, r_j\} / d_{ij}}$ & $\max\{r_i , r_j\}$ \\
Load a type 2 edge $\{i,j\}$ & $\sqrt{\max\{r_i, r_j\} / d_{ij}}$ & $\sqrt{r_i  r_j}$ \\
\hline
\end{tabular}
\end{center}

\begin{figure}
\centerline{\includegraphics[scale=0.33]{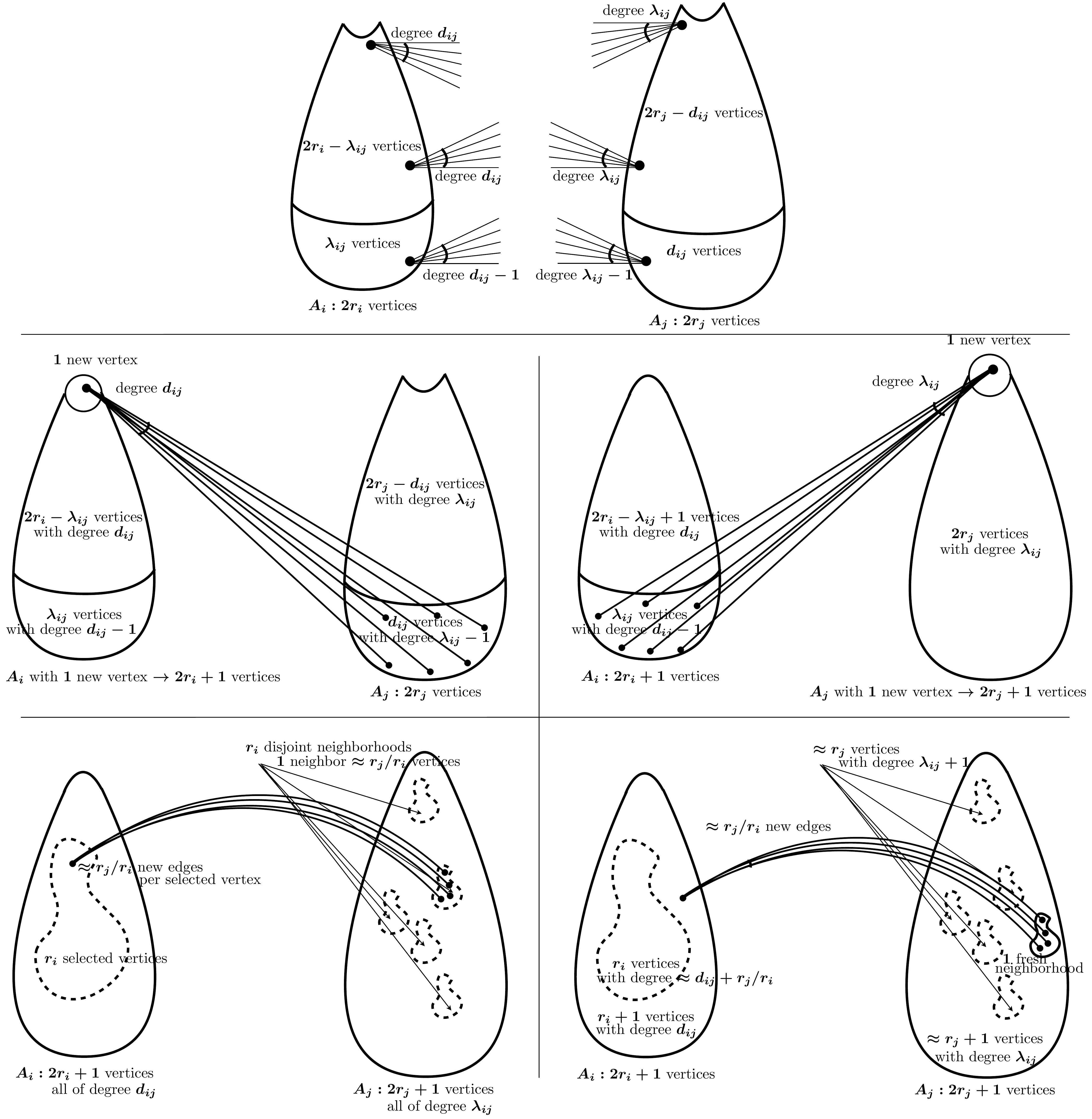}}
\caption{Example of a part of learning graph corresponding to Case~1 and
restricted to the bipartite graph between $A_i$ and $A_j$, where $r_i<r_j$.
Observe that $\lambda_{ij}\approx\frac{r_i}{r_j}d_{ij}$.
The loading schedule is 'setup', 'load $i$', 'load $j$' and 'load $\{i,j\}$'.\label{fig:case1}}
\end{figure}
\begin{figure}
\centerline{\includegraphics[scale=0.33]{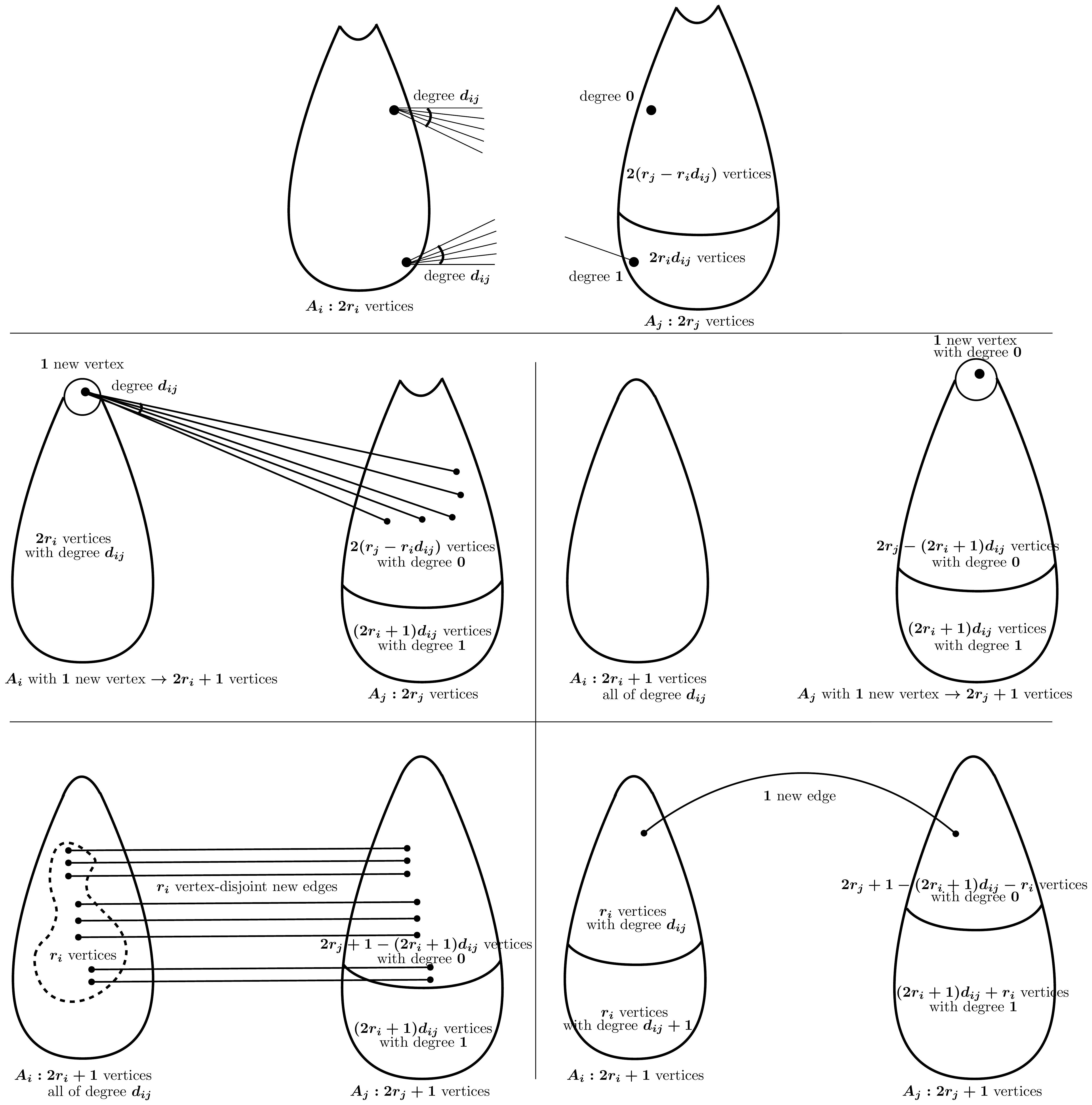}}
\caption{Similar to Figure~\ref{fig:case1}, but for Case~2.\label{fig:case2}}
\end{figure}

\begin{proof}
We show the theorem by giving a learning graph of the stated complexity.  
Vertices of the learning graph 
will be labeled by $k$-partite graphs with color classes $A_1, \ldots, A_k$ of cardinality (of order)
$r_1, \ldots, r_k \in [n]$.  The parameter $d_{ij} \ge 1$ is the average degree of vertices in the smaller of $A_i, A_j$ 
towards the bigger 
in the bipartite graph $G_{ij}$. 

The bipartite graph $G_{ij}$, for each edge $\{i,j\} \in E(H)$, will be 
specified by its type, that is by its degree sequences as given in \defref{def:type}.

We first need to modify the set size parameters $\{r_i\}$ to satisfy a technical condition.  Let 
$r_{i_1} \le \cdots \le r_{i_k}$ be a listing in increasing order.  We set $r_{i_1}' = r_{i_1}$ and 
$r_{i_t}' = \Theta(r_{i_t})$ such that $(2r_{i_t}'+1)/(2r_{i_{t-1}}'+1)$ is an odd integer.  
As a consequence, $(2 \max\{r_i',r_j'\}+1)/(2\min\{r_i', r_j'\}+1)$ is an odd integer, for every $i\neq j$.
We now suppose this is done and drop the primes. 

Throughout the construction of the learning graph we will deal with 
two cases for the bipartite graph between $A_i$ and $A_j$, depending on the size and degree parameters.
\begin{itemize}
\item Case 1 is where $(2 \min\{r_i, r_j\}+1) d_{ij} \ge 2 \max\{r_i, r_j\} +1$, which means that there are enough 
edges from the smaller of $A_i, A_j$ to cover the larger.  We will say that 
the parameters for $\{i,j\}$ are of type 1. In this case, we take $d_{ij}'= \Theta(d_{ij})$ to 
be such that 
\begin{equation}
\label{e:lambda}
2d_{ij}'+1 = (2 \lambda_{ij} + 1) \frac{2 \max\{r_i,r_j\}+1}{2\min\{r_i, r_j\}+1}
\end{equation}
for some integer $\lambda_{ij}$.  This can be done as $(2 \max\{r_i,r_j\}+1)/(2\min\{r_i, r_j\}+1)$ is an odd integer.
In our construction, $\lambda_{ij}$ will be the average degree of the vertices in the larger of 
$A_i, A_j$ towards the smaller, which we want to be integer.  We now consider this done and drop the primes.  
\item Case 2 is where $(2 \min\{r_i, r_j\}+1) d_{ij} < 2 \max\{r_i, r_j\} +1$.  
We will say that the parameters for $\{i,j\}$ are of type 2. In this case, all 
degrees of vertices in the larger of $A_i, A_j$ towards the smaller will be either zero or one.
\end{itemize}

Now we are ready to describe the learning graph.
Figures~\ref{fig:case1} and~\ref{fig:case2} 
illustrate the evolution of a learning graph
for a subsequence $(i,j,\{i,j\})$ of some loading schedule, that is the sequence of instructions
`setup', `load $i$', `load $j$' and `load $\{i,j\}$'. 
The figures only represent the added edges between $A_i$ and $A_j$, where $r_i<r_j$.
Figure~\ref{fig:case1} corresponds to Case~1, and 
Figure~\ref{fig:case2} to Case~2.

Recall that for every positive instance $x$, 
we fixed $a_1, \ldots, a_k \in [n]$ be such that $x_{\{a_u, a_v\}}=1$ for all $\{u,v\} \in E(H)$.
During the construction we will specify   for every edge $\{u,v\} \in E(H)$, and for every stage number $t$,
the {\em correct degree}
$\cd(u,v,t)$ which is the degree of $a_u$ in  $G_{ij}$ towards $a_v$ in each  $L$-vertex of $V_{t+1}$ with positive flow. 

\paragraph{Stage 0 (Setup):}
For each edge $\{i,j\} \in E(H)$ we setup a bipartite graph between $A_i$ and $A_j$.  The type of the 
bipartite graph depends on the type of the parameters for $\{i,j\}$.  Let $\ell = \min\{r_i, r_j\}$ and $g = \max\{r_i, r_j\}$.
\begin{itemize}
\item Case 1:   Solving for $\lambda_{ij}$ in \eqnref{e:lambda} we get $\lambda_{ij} = 
((2\ell + 1) d_{ij} + \ell- g) / (2g + 1)$.  Intuitively, $d_{ij}$ represents the average degree of vertices in the smaller 
of $A_i, A_j$ and $\lambda_{ij}$ the average degree in the larger.  
Formally, the type of bipartite graph between $A_i,A_j$, with the listing of degrees for the smaller set given first, is
$(\{(2\ell-\lambda_{ij}, d_{ij}), (\lambda_{ij}, d_{ij}-1)\}, 
\{(2g-d_{ij}, \lambda_{ij}),(d_{ij}, \lambda_{ij}-1)\})$.

\item Case 2: In this case the type of bipartite graph between $A_i$ and $A_j$, with the listing of degrees for the smaller 
set given first, is $(\{(2\ell, d_{ij})\}, \{(2\ell d_{ij}, 1 ), (2g - 2\ell d_{ij},0)\})$.
\end{itemize}
The $L$-vertices at the end of stage $0$ will be labeled by (possibly overlapping) sets 
$A_1, \ldots, A_k$ of sizes $r_1, \ldots, r_k$ and edges corresponding to a graph of the appropriate type between 
$A_i$ and $A_j$ for all $\{i,j\} \in E(H)$.  Flow goes uniformly to those $L$-vertices where none of 
$a_1, \ldots, a_k$ are in any of the sets $A_1, \ldots, A_k$.  For all $\{u,v\} \in E(H)$, we set 
$\cd(u,v,0) = 0$.

\paragraph{Stage $t$ when $s_t = i$:} In this stage we load $a_i$.  The $L$-edges in this stage select a vertex 
$v$ and add it to $A_i$.  For all $j$ such that $\{i,j\} \in E(H)$ we add the following edges:
\begin{itemize}
\item Case 1: Say the parameters for $\{i,j\}$ are of type 1.
If $r_i \le r_j$, then $v$ is connected 
to those vertices of degree $\lambda_{ij}-1$ in $A_j$, and we set $\cd(i,j,t) = d_{ij}$.
Otherwise 
$v$ is connected to those vertices of degree $d_{ij} -1$ in $A_j$, and we set $\cd(i,j,t) = \lambda_{ij}$. 
\item Case 2:  Say the parameters for $\{i,j\}$ are of type 2.
If $r_i \le r_j$ then $v$ is connected to $d_{ij}$ vertices of degree 0 in $A_j$, and we set $\cd(i,j,t) = d_{ij}$.  
Else 
no edges are added between $v$ and $A_j$, and we set $\cd(i,j,t) = 0$.
\end{itemize}

For all other $(u,v)$, we set $\cd(u,v,t) = \cd(u,v,t-1)$.
Flow goes uniformly on those $L$-edges where $v=a_i$.  

\paragraph{Stage $t$ when $s_t = \{i,j\}$:} In this stage we load $\{a_i, a_j\}$.  
Again we break down according to the type of the parameters for $\{i, j\}$.
Let $\ell = \min\{r_i, r_j\}$ and 
$g = \max\{r_i, r_j\}$.

\begin{itemize}
\item Case 1: As both $a_i$ and $a_j$ have been 
loaded, between $A_i$ and $A_j$ there is a bipartite graph of type 
$(\{(2\ell+1, d_{ij})\}, \{(2g+1, \lambda_{ij})\})$, with the degree listing of the smaller set 
coming first.  If we simply added $\{a_i, a_j\}$ at this step, $a_i$ and $a_j$ would be uniquely identifiable by their 
degree and blow up the complexity of later stages.

To combat this, loading $\{a_i, a_j\}$ will consist of two substages $t.I$ and $t.II$.  The first substage is a 
hiding step, done to reduce the complexity of having $\{a_i, a_j\}$ loaded.  Then we actually load $\{a_i, a_j\}$. 

Substage $t.I$: Let $h = (2g+1)/(2\ell+1)$.  We select $\ell$ vertices in the smaller of $A_i, A_j$, and to each of 
these add $h$ many neighbors.  All neighbors chosen in this stage are distinct.  Thus at the end of this stage the type 
of bipartite graph between $A_i$ and $A_j$ is $(\{(\ell, d_{ij}+h), (\ell+1, d_{ij})\}, 
\{(\ell (2g+1)/(2\ell+1), \lambda_{ij}+1), ((2g+1)(1-\ell/(2\ell+1)), \lambda_{ij})\})$.  Flow goes uniformly along those 
$L$-edges where neither $a_i$ nor $a_j$ receive any new edges.  For all $\{u,v\} \in E(H)$, we set $\cd(u,v,t+1) = \cd(u,v,t)$.

Substage $t.II$:  The $L$-edges in this substage select a vertex 
$u$ in the smaller of $A_i, A_j$ of degree $d_{ij}$ and add $h$ many neighbors of degree $\lambda_{ij}$.  
Flow goes uniformly along those $L$-edges where $u \in \{a_i, a_j\}$ and $\{a_i, a_j\}$ is one of the edges 
added. Let $s$ be the index of the smaller of the sets $A_i, A_j$, and let $b$ the other index.
We set $\cd(s,b,t+1) = d_{ij} +h, \cd(b,s,t+1) = \lambda_{ij} +1$ and $\cd(u,v,t+1) = \cd(u,v,t)$ for
$\{u,v\} \neq \{i,j\}$.

\item Case 2:  As both $a_i$ and $a_j$ have been loaded, there is a bipartite graph of type 
$(\{(2\ell+1, d_{ij})\}, \{((2\ell+1)d_{ij},1), (2g+1 - (2\ell+1)d_{ij},0)\})$.  We again first do a hiding step, and then add
the edge $\{a_i, a_j\}$.

Substage $t.I$: We select $\ell$ vertices in the smaller of $A_i, A_j$ and to each add a single edge to a vertex 
of degree zero in the larger of $A_i, A_j$.  Flow goes uniformly along those $L$-edges where no edges adjacent 
to $a_i, a_j$ are added. For all $\{u,v\} \in E(H)$, we set $\cd(u,v,t+1) = \cd(u,v,t)$.

Substage $t.II$: A single edge is added between a vertex in the smaller of $A_i, A_j$ of degree $d_{ij}$ and a 
vertex in the larger of $A_i, A_j$ of degree zero.  Flow goes along those $L$-edges where $\{a_i, a_j\}$ is 
added.  Let again $s$ be the index of the smaller of the sets $A_i, A_j$, and let $b$ the other index.
We set $\cd(s,b,t+1) = d_{ij} +1, \cd(b,s,t+1) = 1$ and $\cd(u,v,t+1) = \cd(u,v,t)$ for
$\{u,v\} \neq \{i,j\}$.
\end{itemize}

This completes the description of the learning graph.

\paragraph{Complexity analysis}
We will use \lemref{e:simple_costg} to evaluate the complexity of each stage.  First we need to establish the 
hypothesis of this lemma, which we will do using \lemref{lem:trans_cons} and \lemref{e:symmetry}.
Remember that given $\sigma \in S_n$, we defined and denoted by $\sigma$ the permutation over $[n]\times[n]$ 
such that $\sigma(i,j)=(\sigma(i),\sigma(j))$.
First of all let us observe that every $\sigma \in S_n$ is in the automorphism group of the function we are computing,
since it maps a 1-certificate into a 1-certificate.  As the flow only depends on the 1-certificate graph, this implies 
that $S_n$ acts transitively on the flows and therefore we obtain the conclusion of \lemref{lem:trans_cons}.

Let $V_t$ stand for the $L$-vertices at the beginning of stage $t$. 
For a positive input $x$, and for an $L$-vertex $P \in V_t$,
we will denote the incoming flow to $P$ on $x$ by $p_x(P)$ and the number of outgoing edges from $P$ with
positive flow on $x$ by $g_x^+(P)$. For an $L$-vertex $R \in V_{t-1}$ 
we will denote by $g_{x,R}^-(P)$ number of incoming edges to $P$ from $L$-vertices of  the isomorphism type of $R$ with
positive flow on $x$, that is  $g_{x,R}^-(P) = |\{ \tau \in S_n : p_x(\tau(R),P) \neq 0 \}|$.
The crucial features of our learning graph construction are
the following:  at every stage, for every $L$-vertex $P$ and every $\sigma \in S_n$, the $L$-vertex $\sigma(P)$ is 
also present.
The outgoing flow from an $L$-vertex is always uniformly distributed among the edges getting  flow.
The flow depends only on the vertices in the input containing a copy of the graph $H$, and therefore the values
$g_x^+(P)$ and $g_{x,R}^-(P)$, for $p_x(P)$ non-zero, 
depend only on the isomorphism types of $P$ and $R$. Mathematically,
this last property translates to:
for all $t$, for all $P \in V_t$, for all $R \in V_{t-1}$, for all positive inputs $x$ and $y$, for all $\sigma \in S_n$, we have
\begin{equation}
\label{e:basic}
[ p_x(P) \neq 0 \mbox{{\rm ~and~}} p_y(\sigma(P)) \neq 0 ] 
\ \ \ \Longrightarrow \ \ \ 
[ g_x^+(P) = g_y^+(\sigma(P)) \mbox{{\rm ~and~}} g_{x,R}^-(P) = g_{y,R}^-(\sigma(P)) ].
\end{equation}
which is exactly the hypothesis of \lemref{e:symmetry}.

Now we have established the hypotheses of \lemref{e:simple_costg} and turn to evaluating the bound given there.  
The main task is evaluating the maximum vertex ratio of each stage.  The general way we will do this is to 
consider an arbitrary vertex $P$ of a stage.  We then lower bound the probability that $\sigma(P)$ is in the flow 
for a positive input $x$ and a random permutation $\sigma\in S_n$, 
without using any particulars of $P$.  This will then upper bound the maximum vertex 
ratio.  We use the notation $P \in F_x$ to denote that $L$-vertex $P$ has at least one incoming edge with flow on 
input $x$.  

\begin{lemma}[Maximum vertex ratio]\label{lem:vertexflow}
For any $L$-vertex $P \in V_{t+1}$ and any positive input $x$
\[
\Pr_\sigma[\sigma(P) \in F_x] =
\Omega \left(\prod_{j \in \Vset_t} \frac{r_j}{n} \prod_{(u,v) \in \Eset_t} \frac{d_{uv}}{\max\{r_u, r_v\}}\right).
\]
\end{lemma}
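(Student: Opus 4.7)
The plan is to convert the probability into a concrete counting ratio using orbit--stabilizer, and then estimate each factor in that ratio. First I would characterize $F_x$ explicitly: by construction of the flow at each stage, an $L$-vertex $Q \in V_{t+1}$ lies in $F_x$ iff its label correctly embeds the fixed copy $a_1,\ldots,a_k$ of $H$ in $x$, meaning that for every $i \in \Vset_t$ the vertex $a_i$ lies in $A_i^Q$ with the correct degree $\cd(i,j,t)$ toward each neighboring $A_j^Q$, for every $\{u,v\} \in \Eset_t$ the edge $\{a_u,a_v\}$ lies in $E_{uv}^Q$, and for every unloaded $j$ one has $a_j \notin A_j^Q$. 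The unloaded-vertex condition only costs an $\Omega(1)$ factor because $r_j \le n/4$ and $k$ is constant.

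Since $S_n$ acts transitively on the $L$-vertices of any fixed isomorphism type, the orbit--stabilizer theorem reduces the probability to a ratio of combinatorial counts:
\[
\Pr_\sigma[\sigma(P) \in F_x] \;=\; \frac{|F_x \cap [P]|}{|[P]|}.
\]
The denominator counts all placements of the sets $A_1,\ldots,A_k$ of sizes $r_1,\ldots,r_k$ in $[n]$ together with all bipartite graphs of the prescribed type between pairs $A_i,A_j$ with $\{i,j\}\in E(H)$. The numerator is the same count subject to the loaded constraints above. Since the counts factor independently over color classes and over bipartite graphs, the ratio decomposes as a product of per-constraint ratios.

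I would then estimate each ratio. A loaded vertex constraint $a_i \in A_i$ with prescribed degree profile contributes a factor $\Theta(r_i/n)$: the probability that a uniformly random $r_i$-subset of $[n]$ contains $a_i$ is $r_i/n$, and thanks to the hiding substages built into each edge load a constant fraction of vertices in $A_i$ carry any admissible degree profile, so restricting to those vertices costs only a constant. A loaded edge constraint $\{a_u,a_v\}\in E_{uv}$ contributes a factor $\Theta(d_{uv}/\max\{r_u,r_v\})$: in Case~1 this follows from the type of $G_{uv}$ together with the identity $\lambda_{uv} \approx d_{uv}\min\{r_u,r_v\}/\max\{r_u,r_v\}$ obtained from \eqnref{e:lambda}; in Case~2 it is direct, since $G_{uv}$ has $r_{\min}d_{uv}$ edges out of $r_ur_v$ possible pairs. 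Multiplying these factors produces exactly the claimed bound.

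The main obstacle is verifying that the ``correct degree profile'' condition on each $a_i$ is compatible with a constant fraction of the vertices of $A_i$ sharing that profile; without this the vertex-constraint factor would shrink below $\Theta(r_i/n)$ and the overall bound would be too weak. This is precisely the purpose of the hiding substage~$t.I$, and one must trace through the explicit type descriptions in both Case~1 and Case~2 of the edge-loading step to confirm that $\Theta(r_i)$ vertices of $A_i$ carry the correct degree toward each neighbor at every stage. Once this bookkeeping is in place the counting splits cleanly and the product bound follows.
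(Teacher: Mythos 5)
Your proposal is correct and follows essentially the same route as the paper: characterize the support $F_x$ of the flow by the three conditions (loaded vertices present, loaded edges present with correct degrees, unloaded items absent), reduce the probability to a product of per-vertex factors $\Theta(r_i/n)$ and per-edge factors $\Theta(d_{uv}/\max\{r_u,r_v\})$, and use the hiding substage to argue the correct-degree condition costs only a constant; your orbit--stabilizer reformulation $|F_x\cap[P]|/|[P]|$ is just an equivalent phrasing of the paper's direct computation over a random $\sigma$, and your edge-factor estimate matches the paper's double-counting Lemma~\ref{l:uniform}. The one step you assert ``by construction'' but which actually requires work is the \emph{if} direction of the characterization of $F_x$ --- that every $L$-vertex satisfying the three conditions really does receive flow --- which the paper proves by induction on $t$ by exhibiting, for each such $P$, a predecessor in $V_t$ with positive flow (e.g.\ decomposing $E_{ij}$ into the pre-loading graph and the edges added during loading); this direction is precisely what is needed for the lower bound, so it should not be skipped, though filling it in does not change your argument.
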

\begin{proof}
We claim that an $L$-vertex $P$ in $V_{t+1}$, that is at the end of stage $t$, has flow if and only if
\begin{eqnarray}
\label{e:present}
&&\forall  i \in \Vset_t, ~ \forall  \{i,j\} \in \Eset_t, 
  \text{~we~have~} a_i \in A_i  \mbox{{\rm ~and~}} \{a_i, a_j\} \in E_{ij},\\
&&\label{e:absent}
 \forall  i \in [k] \setminus \Vset_t, ~ \forall  \{i,j\} \in H(E) \setminus \Eset_t,
 \text{~we~have~} a_i \not \in A_i  \text{~and~} \{a_i, a_j\} \not \in E_{ij},\\
&&\label{e:correct}
\forall \{i,j\} \in \Eset_t,
\text{~the~degree~of~} a_i  \mbox{{\rm ~in ~}} G_{ij} \text{~towards ~} A_j \text{~is ~} \cd(i,j,t).
\end{eqnarray}
The only if part of the claim is obvious by the construction of the learning graph. The if part can be proven
by induction on $t$. For $t =0 ,$ 
the first half (\ref{e:absent}) is exactly the one
which defines the flow for $L$-vertices in $V_1$. 

For the inductive step let us suppose first that $s_t = i$.
Consider the label $P'$ by dropping the vertex $a_i$ from $A_i$.  
Then in $P'$ every bipartite graph is of appropriate type for level $t$ because of (\ref{e:correct}), and
therefore $P' \in V_t$. It is easy to check that $P'$ also satisfies all three conditions,
(for (\ref{e:correct}) we also have to use the second half of (\ref{e:absent}): $\{a_i, a_j\} \not \in E_{ij}$), and therefore has positive flow.
Since $P'$ is a predecessor of $P$ is the learning graph, $P$ has also positive flow.

Now let us suppose that $s_t = \{i,j\}$. In $P$ the edge set $E_{ij}$ can be decomposed into the disjoint union
of $E_1 \cup E_2$, where $E_1$ a bipartite graph of type $(\{(2 \ell+1, d_{ij})\}, 
\{(2g+1, \lambda_{ij})\})$ and $E_2$ is of type
$(\{( \ell+1, h), (\ell , 0\}, 
\{((\ell +1)h, 1), (2g+1 - ( \ell +1)h, 0)\})$, and (\ref{e:correct}) implies that $\{a_i, a_j\} \in E_2$.
Consider the label $P'$ by dropping the edges of $E_2$ from $E_{ij}$. Again, $P'$ satisfies the inductive
hypotheses, and therefore gets positive flow, which implies the same for $P$.

Suppose now that the $L$-vertex $P$ is labeled by sets $A_1, \ldots, A_k$ (some may be empty) and let the 
set of edges between $A_i$ and $A_j$ be $E_{ij}$.  We want to lower bound the probability that 
$\sigma(P) \in F_x$, meaning that $\sigma(P)$ satisfies the above three conditions.  Item~(\ref{e:absent}) is 
always satisfied with constant probability; moreover, conditioned on item~(\ref{e:absent}) the probability of the 
other events does not decrease.  Thus we take this constant factor loss and focus on the 
items~(\ref{e:present}), (\ref{e:correct}). 

We also claim that, conditioned on item~(\ref{e:present}) holding, item~(\ref{e:correct}) 
holds with constant probability.  This can be seen 
as in the hiding step, in both case~1 and case~2, the probability that $a_i, a_j$ have the correct degree given that 
they are loaded is at least $1/4$.  In the step of loading an edge, again in case~1 half the vertices on the left and 
right hand sides have the correct degree and so this probability is again $1/4$; in case~2, given that the edge is 
loaded, whichever of $a_i,a_j$ is in the larger set will automatically have the correct degree, and the other one will 
have correct degree with probability $1/2$.  Now we take this constant factor loss to obtain that  
$\Pr_\sigma[\sigma(P) \in F_x]$ is lower bounded by a constant factor times the probability that 
item~(\ref{e:present}) holds.

The events in the first condition are independent, except that for the edge $\{a_i, a_j\}$ to be loaded
the vertices $a_i$ and $a_j$ have to be also loaded.
Thus we can lower bound the probability it is 
satisfied by
$$
\Pr_\sigma[\sigma(P) \in F_x] = \Omega\Big(\prod_{i \in VS_t} \Pr_\sigma[a_i \in \sigma(A_i)] 
\times\!\!\! \prod_{(u,v) \in ES_t}  \Pr_\sigma[\{a_i, a_j\} \in \sigma(E_{ij}) | a_i \in \sigma(A_i), a_j \in \sigma(A_j)] \Big)
$$
Now $\Pr_\sigma[a_i \in \sigma(A_i)] = \Omega(r_i/n)$ as this fraction of permutations will put 
$a_i$ into a set of size $r_i$.  
For the edges we use the following lemma.
\begin{lemma}
\label{l:uniform}
Let $Y_1,Y_2 \subseteq [n]$ be of size $\ell, g$ respectively, and 
let $(y_1,y_2)\in Y_1\times Y_2$.
Let $K$ be a bipartite graph between $Y_1$ and $Y_2$
of type $\{(\ell, d)\}, \{(g, \ell d /g)\}$.
Then $\Pr_\sigma[\{y_1,y_2\} \in \sigma(K) =d/ g$.
\end{lemma}
\begin{proof}
Because of symmetry, this probability does not depend on the choice of $\{y_1, y_2\}$;
denote it by $p$. Let $K_1, \ldots, K_c$ be an enumeration of all bipartite graphs isomorphic to $K$. 
We will count in two different ways the cardinality $\chi$ of the set
$\{ (e,h) : e \in K_h \}$. Every $K_h$ contains $\ell d$ edges, therefore $\chi = c\ell d$.  
On the other hand, every edge appears in $pc$ graphs, therefore $\chi = \ell g pc$, and thus $p = d/g$.
\end{proof}
In our case, the graph $G_{ij}$ as in the hypothesis of the lemma plus some additional edges.  By 
monotonicity, it follows that 
$$
\Pr_\sigma[\{a_i, a_j\} \in \sigma(E_{ij}) | a_i \in \sigma(A_i), a_j \in \sigma(A_j)]=\Omega(d_{ij}/ \max\{r_i, r_j\}).
$$
\end{proof}

This analysis is common to all the stages.  Now we go through each type of stage in turn to evaluate the 
stage specific length and degree ratio.  
\paragraph{Setup Cost:} 
The length of this stage is upper bounded by 
$$\sum_{(i,j) \in E(H)} \min\{r_i, r_j\} d_{ij}.$$  We can upper bound the degree ratio by 
$$\prod_{i \in [k]} \frac{{n \choose 2r_i}}{{n-k \choose 2r_i}} \le 2^{k} = O(1)$$
as $r_i < n/4$.

\paragraph{Stage $t$ when $s_t=i$:} 
In a stage loading a vertex the degree ratio is $O(n)$ as there are $n-r_i$ possible vertices to 
add yet only one is used 
by the flow.  The length of this stage is the total degree which is upper bounded by
$$ \sum_{\substack{j: \{i,j\} \in E(H) \\ r_i \le r_j}} d_{ij} +
 \sum_{\substack{j : \{i, j\} \in E(H) \\ r_i > r_j}} \frac{r_j d_{ij}}{r_i}.
$$

\paragraph{Stage $t$ when $s_t=\{i,j\}$:} Technically we should analyze the complexity of the two substages as two distinct
stages. However, as we will see, in both cases the degree ratio in the first substage is $O(1)$,
and therefore the local cost of this stage is just the maximum of the local cost of the two substages.

\paragraph{Stage $t.I$
:}
In Case~1, the length of this stage is $O(\max\{r_i, r_j\})$ and the degree ratio is constant.
In Case~2, the length of this stage is $O(\min\{r_i, r_j\})$ and the degree ratio is constant.

\paragraph{Stage $t.II$
:}
In Case~1, the length is $h=O(\max\{r_i, r_j\}/\min\{r_i, r_j\})$.  The degree ratio is of order
$\ell \frac{{g \choose h}}{{g-1 \choose h-1}} =O(\ell^2)$.
Thus the square root of the degree ratio times the length is of order $\max\{r_i, r_j\}$.

In Case~2, the length is one and the degree ratio is $O(r_i r_j)$ as there are $O(r_i r_j)$ 
many possible edges that could be added and the flow uses one.  

Thus in Case~1 in both substages the product of the length and square root of degree ratio is 
$O(\max\{r_i, r_j\})$.  In Case~2, substage II dominates the complexity where the product of the length and square root 
of degree ratio is $O(\sqrt{r_i r_j})$.
\end{proof}

\subsection{Extensions and basic properties}\label{sec:extension}
We now  extend \thmref{thm:main} to the general case of computing a function 
$f: [q]^{n \times n} \rightarrow \B$ with constant-sized $1$-certificates.  A certificate graph for such a function 
will be a directed graph possibly with self-loops.  Between $i$ and $j$ there can be bidirectional edges, that is 
both $(i,j)$ and $(j,i)$ present in the certificate graph, but there will not be multiple edges between $i$ and $j$, as 
there are no repetitions of indices in a certificate.  

We start off by modifying the algorithm of \thmref{thm:main} to work for detecting directed graphs with possible 
self-loops.  To do this, the following transformation will be useful.
\begin{Definition}
Let $H$ be a directed graph, possibly with self-loops.  The undirected version $U(H)$ of $H$ 
is a simple undirected graph formed by eliminating any self-loops in $H$, and making all edges of $H$ 
undirected and single.
\end{Definition}   

\begin{lemma}
\label{lem:directed}
Let $H$ be a directed $k$-vertex graph, possibly with self loops.  Then the quantum 
query complexity of detecting if an $n$-vertex directed graph $G$ contains $H$ as a subgraph is at most a 
constant times the complexity given in~\thmref{thm:main} of detecting $U(H)$ in an $n$-vertex undirected graph.  
\end{lemma}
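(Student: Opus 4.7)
The plan is to build a learning graph for directed $H$-detection that mirrors the learning graph for $U(H)$-detection given by \thmref{thm:main}, paying only a constant factor per modification. The key observation is that the analysis in \thmref{thm:main} depends only on the \emph{structural} pattern of queries in the certificate graph, and not on the alphabet or on the labels of the queried slots: lengths, vertex ratios, and degree ratios are all determined by the combinatorial skeleton $U(H)$. I would therefore reuse the construction essentially verbatim, changing only the local semantics of ``loading a vertex'' and ``loading an edge.''

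First I would handle self-loops. For each $i \in [k]$ such that $(i,i) \in E(H)$, I augment the loading schedule so that immediately after $s_t = i$ the learning graph also queries the input slot $(a_i,a_i)$. This adds a sub-stage of length $O(1)$ whose degree ratio is $O(1)$: once the vertex $a_i$ has been loaded, the slot $(a_i,a_i)$ is uniquely determined, so the flow chooses this slot deterministically. This sub-stage has cost no larger than the surrounding vertex-load stage, hence it only inflates the overall complexity by a constant.

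Next I would handle directedness. Each edge $\{i,j\} \in E(U(H))$ corresponds to at most two directed edges $(i,j),(j,i) \in E(H)$; whenever the loading schedule of \thmref{thm:main} executes $s_t = \{i,j\}$, I replace it by a loading step that queries the (one or two) corresponding directed input slots of $H$. Since the number of directed edges loaded per undirected load is at most two, the length of this stage at most doubles; the vertex sets $A_i$ and the bipartite graphs between them (and their types) are unchanged, so the degree ratio is unchanged, and the sequence of allowed label-sets for the $L$-vertices is the same as before up to constants. Hence the stage cost increases by at most a constant factor.

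Finally I would verify that the complexity analysis carries over. The vertex ratio argument from \lemref{lem:vertexflow} used only the probability that a random relabeling of the $A_i$'s contains $a_1,\ldots,a_k$ with the correct degrees in the undirected bipartite graphs $G_{ij}$; the added self-loop and orientation queries are uniquely determined once the vertices are placed, so this probability is unchanged up to constants. The symmetry hypotheses for \lemref{lem:trans_cons} and \lemref{e:symmetry} are preserved because $S_n$ is still in the automorphism group of the directed $H$-containment function, and the flow continues to depend only on the positions of $a_1,\ldots,a_k$. The main (mild) obstacle is checking that the ``hiding'' sub-stage used for edge-loading in Case~1 and Case~2 of \thmref{thm:main} still hides the loaded directed edges: since both orientations are added in the same sub-stage and the hiding step already adds enough disjoint neighbors to absorb a constant number of additional edges per loaded pair, the constant factors propagate through without issue.
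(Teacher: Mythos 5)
Your handling of directedness is essentially the paper's, provided you make explicit that \emph{every} edge of the bipartite graph $G_{ij}$ --- those placed at setup and in the hiding substage, not just the certificate pair $\{a_i,a_j\}$ --- is oriented identically (and doubled when both $(i,j)$ and $(j,i)$ lie in $E(H)$); otherwise the bidirectional pair $\{a_i,a_j\}$ would be the unique such pair in $G_{ij}$ and hence identifiable. The real gap is in your treatment of self-loops. You query only the single slot $(a_i,a_i)$ in a sub-stage right after loading $a_i$. From that point on, $a_i$ is the unique vertex of $A_i$ whose self-loop appears in the $L$-vertex label, so it is identifiable for the rest of the computation. Concretely, in the argument of \lemref{lem:vertexflow} the event that $\sigma(P)$ carries flow now requires $\sigma$ to map the one self-looped vertex $u^\ast\in A_i$ exactly onto $a_i$, an event of probability $1/n$, rather than merely requiring $a_i\in\sigma(A_i)$, which has probability $\Omega(r_i/n)$. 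The maximum vertex ratio of every subsequent stage is therefore inflated by a factor of $\Theta(r_i)$, and each such stage cost by $\Theta(\sqrt{r_i})$, which breaks the claimed bound (in the applications $r_i$ is polynomial in $n$). This is precisely the identifiability phenomenon that forces the hiding substage $t.I$ when an edge is loaded; your remark that the slot $(a_i,a_i)$ is ``uniquely determined'' is exactly the problem rather than a justification. There is also a definitional wrinkle: the label of an $L$-vertex does not record which element of $A_i$ was added last, so a deterministic choice of that slot is unavailable without either enriching the labels (which reintroduces identifiability) or branching over all of $A_i$ (degree ratio $r_i$).

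The paper sidesteps all of this by placing a self-loop on \emph{every} vertex of $A_i$ already at setup, and on the new vertex $v$ at the moment it is added to $A_i$. Then all vertices of $A_i$ look alike, the slot $(a_i,a_i)$ is loaded if and only if $a_i$ is, and no separate loading or hiding step is needed. The extra $O(r_i)$ setup length is absorbed into the setup cost $\sum_{\{u,v\}\in E(H)}\min\{r_u,r_v\}\,d_{uv}$ by the third admissibility condition. With that modification your argument goes through.
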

\begin{proof} 
Let $H$ be a directed $k$-vertex graph (possibly with self-loops) and $H'=U(H)$ be its undirected version.
Let $r_1, \ldots, r_k, d_{ij}$ be admissible parameters for $H'$, and $S$ a loading schedule for $H'$.  Fix a directed 
$n$-vertex graph $G$ containing $H$ as a subgraph.  Let $a_1, \ldots, a_k$ be vertices of $G$ such that
$(a_i, a_j) \in E(G)$ for $(i,j) \in E(H)$.  
We convert the algorithm for loading $H'$ in \thmref{thm:main} into one for loading $H$ of the same complexity.  

The setup step for $H'$ is modified as follows.  In the bipartite graph between $A_i$ and $A_j$, if both 
$(i,j), (j,i) \in E(H)$ then all edges between $A_i$ and $A_j$ are directed in both directions; otherwise, 
if $(i,j) \in E(H)$ or $(j,i) \in E(H)$ they are 
directed from $A_i$ to $A_j$ or vice versa, respectively.  For every self-loop 
in $H$, say $(i,i) \in E(H)$, we add self-loops to the vertices in $A_i$.  Note that these modifications at most 
double the number of edges added, and hence the cost, of the setup step.

Loading a vertex: When loading $a_i$ we connect it as before, now orienting the 
edges according to $(i,j)$ or $(j,i)$ in $E(H)$, or both.  If $(i,i) \in E(H)$, then we add a self loop to $a_i$.  
The only change in the complexity of this stage is again the length, which at most doubles.  Notice that in the case 
of a self-loop we have also already loaded the edge $(a_i, a_i)$.  We do not incur an extra cost for loading 
this edge, however, as the self loop is loaded if and only if the vertex is.

Loading an edge: Say that we are at the stage where $s_t = \{i,j\} \in E(H')$.  If exactly one of $(i,j), (j,i) \in E(H)$ 
then this step happens exactly as before, except that the bipartite graph has edges directed from $A_i$ to $A_j$ or 
vice versa, respectively.  If both $(i,j)$ and $(j,i) \in E(H)$, then in this step all edges added are bidirectional.  
This again at most doubles the length, and does not affect the degree flow probability as $(a_i, a_j)$ is loaded 
if and only if $(a_j, a_i)$ is loaded as all edges are bidirectional.
\end{proof}

\begin{lemma}
\label{lem:f}
Let $f: [q]^{n \times n} \rightarrow \{0,1\}$ be a function such that all minimal $1$-certificate graphs are isomorphic to 
a directed $k$-vertex graph $H$.  
Then the quantum query complexity of computing $f$ is at most the complexity of detecting $H$ in an $n$-vertex 
graph, as given by \lemref{lem:directed}. 
\end{lemma}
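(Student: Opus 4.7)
The plan is to reuse the learning graph constructed in \lemref{lem:directed} essentially verbatim, exploiting the fact that its construction depends only on the index set of the target certificate and not on the input values themselves. Nothing in the $L$-vertex labels $S(v)\subseteq[n]\times[n]$, the weights $w(e)$, or the flow-routing rules of \thmref{thm:main} and \lemref{lem:directed} ever inspects a value $y_{(u,v)}$; all definitions look only at which vertices $a_i$ have been placed into which set $A_i$ and which pairs $(a_i,a_j)$ appear in the bipartite graphs $G_{ij}$.

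First I would fix, for every positive input $y\in f^{-1}(1)$, a minimal $1$-certificate $\alpha_y$ of $y$. By hypothesis the certificate graph $H_{\alpha_y}$ is isomorphic to $H$, so I can pick distinct vertices $a_1^y,\ldots,a_k^y\in[n]$ that play the roles of $1,\ldots,k$ under this isomorphism; then $I_{\alpha_y}=\{(a_i^y,a_j^y):(i,j)\in E(H)\}$ (with self-loops absorbed as in \lemref{lem:directed}). Next I would copy the tuple $(\V,\E,S,w,\{p_y\})$ from the proof of \lemref{lem:directed} and \emph{define} the flow $p_y$ by running the original routing rules with respect to the chosen $(a_1^y,\ldots,a_k^y)$. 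Since those rules reference only the combinatorial skeleton of the certificate, not the input values, they yield a well-defined unit flow on exactly the same underlying DAG.

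Then I would verify the two axioms of a learning graph for $f$: the conditions $S(r)=\emptyset$ and $S(u)\subseteq S(v)$ along every $L$-edge are inherited unchanged; and the sinks of $p_y$ in the original construction were precisely the $L$-vertices whose labels contain a copy of $H$ on $a_1^y,\ldots,a_k^y$, which by choice equals $I_{\alpha_y}$. Thus every sink's label contains the index set of a $1$-certificate for $y$, as required by \defref{Definition}. The negative and positive complexities $C_0(E)$ and $C_{1,y}(E)$ are expressed purely in terms of $w$, $\ell$, and $p_y$, all of which agree numerically with those of \lemref{lem:directed}, so $C(\G)$ is unchanged. Invoking \thmref{t:lg_alg} then gives the claimed query upper bound.

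There is no real technical obstacle: the whole content of the lemma is the ``structure, not values'' observation already highlighted in the introduction. The only point worth double-checking carefully is that each routing rule in the proof of \thmref{thm:main} is indeed value-oblivious, i.e., that flow-conservation and uniform-splitting depend solely on whether the relevant $a_i$'s and $(a_i,a_j)$'s are present in the current label, which is manifest from the stage-by-stage description.
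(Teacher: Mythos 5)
Your proposal is correct and follows essentially the same route as the paper: keep $(\V,\E,S,w)$ from \lemref{lem:directed} unchanged, and for each positive input $y$ define $p_y$ to be the flow that the $H$-detection learning graph assigns to the (isomorphic) certificate graph $H_{\alpha_y}$, so that every sink label contains $I_{\alpha_y}$ and the positive and negative complexities are unaffected. The paper's proof is exactly this ``structure, not values'' argument, stated slightly more tersely.
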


\begin{proof} 
We will show the theorem by giving a learning graph algorithm.  Let $\G=(\V,\E,S,w,\{p_y\})$ be the learning graph 
from \lemref{lem:directed} for $H$.  All of $\V,\E,S,w$ will remain the same in our learning graph $\G'$ for $f$.  
We now describe the definition of the flows in $\G'$.

Consider a positive input $x$ to $f$, and let $\alpha$ be a minimal $1$-certificate for $x$ such that the certificate graph $H_\alpha$ is 
isomorphic to $H$.  The flow $p_x$ will be defined as the flow for $H_\alpha$ (thought of as an 
$n$-vertex graph, thus with $n-k$ isolated vertices) in $\G$, the learning graph for detecting $H$.  
This latter flow has the property that the label of every terminal of flow contains $E(H_\alpha)$ and thus 
will also contain the index set of a $1$-certificate for $x$.  

The positive complexity of the learning graph for $f$ will be the same as that for detecting $H$ and the negative 
complexity will be at most that as in the learning graph for detecting $H$, thus we conclude that the complexity 
of computing $f$ is at most that for detecting $H$ as given in \lemref{lem:directed}.
\end{proof}

\begin{theorem}
\label{thm:several}
Say that the $1$-certificate complexity of $f: [q]^{n \times n} \rightarrow \{0,1\}$ is at most a constant $m$, and let 
$H_1, \ldots, H_c$ be the set of graphs (on at most $m$ edges) for which there is some positive input $x$ 
such that $H_i$ is a minimal $1$-certificate graph for $x$.  Then the quantum query complexity of computing $f$ is at most a constant times the maximum of 
the complexities of detecting $H_i$ for $i=1,\ldots, c$ as given by \lemref{lem:directed}.
\end{theorem}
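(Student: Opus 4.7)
The plan is to combine the $c$ learning graphs given by \lemref{lem:directed} (one for each possible certificate graph shape $H_i$) into a single learning graph for $f$, using a new common root and exploiting the fact that $c$ is constant. The construction generalizes the one in \lemref{lem:f}: there, a single $\G$ for detecting $H$ was reused for $f$ by choosing, for each positive $x$, a minimal $1$-certificate $\alpha$ whose certificate graph is isomorphic to $H$ and routing the flow $p_x$ through $\G$ as the flow designed for $H_\alpha$. The sinks of that flow in $\G$ already contain $E(H_\alpha) = I_\alpha$, so they contain the index set of a $1$-certificate for $x$ as required. In our setting the only new wrinkle is that different positive inputs $x$ may have their minimal certificate graph isomorphic to different $H_i$'s.

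First I would apply \lemref{lem:directed} to each $H_i$ to obtain a learning graph $\G_i=(\V_i,\E_i,S_i,w_i,\{p_y^{(i)}\})$, where the flows $p_y^{(i)}$ are defined for every $n$-vertex graph containing $H_i$. Let $C_i$ denote its complexity, with negative/positive parts $C_0^{(i)}, C_1^{(i)}$. By a routine global rescaling $w_i \mapsto t_i w_i$ (which multiplies $C_0$ by $t_i$ and divides $C_1$ by $t_i$, leaving $C_i$ unchanged) we may assume $C_0^{(i)} = C_1^{(i)} = C_i$.

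Next I would define the combined learning graph $\G$ for $f$. Introduce a new root $r^*$ with label $\emptyset$, connect $r^*$ by a length-$0$ edge of weight $1$ to each original root $r_i$ of $\G_i$, and take the disjoint union of the $\V_i, \E_i, S_i, w_i$'s for everything else. For each positive input $x$, fix some minimal $1$-certificate $\alpha_x$ for $x$ with certificate graph $H_{\alpha_x}$ isomorphic to $H_{i(x)}$ (such an $i(x)$ exists by assumption on $f$). Define $p_x$ in $\G$ to route one unit along the length-$0$ edge $r^* \to r_{i(x)}$ and then follow the flow $p^{(i(x))}_{H_{\alpha_x}}$ inside $\G_{i(x)}$. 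The sinks of $p_x$ are sinks of $\G_{i(x)}$ whose labels contain $E(H_{\alpha_x}) = I_{\alpha_x}$, hence contain the index set of a $1$-certificate for $x$, so $\G$ is a valid learning graph for $f$.

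It remains to bound the complexity. The length-$0$ edges contribute nothing to either $C_0$ or $C_1$, so $C_0(\G) = \sum_{i=1}^c C_0^{(i)} \le c\,\max_i C_i$, and for each positive $x$, $C_{1,x}(\G) = C^{(i(x))}_{1,x}(\G_{i(x)}) \le C_1^{(i(x))} \le \max_i C_i$, so $C_1(\G) \le \max_i C_i$. Therefore
\[
C(\G) = \sqrt{C_0(\G)\,C_1(\G)} \le \sqrt{c}\,\max_i C_i = O\Big(\max_i C_i\Big),
\]
since $c$ is a constant (determined by $m$ and the input alphabet). By \thmref{t:lg_alg} this bounds $Q(f)$ up to a constant, which is exactly the claim. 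The only delicate points are (i) checking that attaching the $\G_i$'s via zero-length edges does not break the structural requirement $S(u)\subseteq S(v)$ on edges of $\G$, which holds because $S(r^*)=\emptyset\subseteq S(r_i)$, and (ii) ensuring that the balancing step is legal---this is immediate since uniform rescaling of weights preserves feasibility of the flows and rescales the two complexity components reciprocally. I do not anticipate any serious obstacle beyond these bookkeeping checks; the essence of the argument is just that taking a constant-size disjoint union of balanced learning graphs costs only a $\sqrt{c}=O(1)$ factor.
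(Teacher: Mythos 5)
Your proposal is correct and follows essentially the same route as the paper: balance each $\G_i$ so its positive and negative complexities coincide, attach a fresh root to the roots of $\G_1,\ldots,\G_c$ by weight-one (length-zero) edges, and route each positive input's flow into the $\G_{i(x)}$ matching its minimal certificate graph as in \lemref{lem:f}, paying only an $O(\sqrt{c})=O(1)$ factor. Your accounting is in fact marginally cleaner than the paper's (you note the connecting edges contribute nothing to either complexity, and that the positive complexity is bounded by $\max_i C_i$ rather than a $c$-fold multiple), but the argument is the same.
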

\begin{proof}
Consider learning graphs $\mathcal{G}_1, \ldots, \mathcal{G}_c$ given by \lemref{lem:directed} for detecting 
$H_1,\ldots, H_c$ respectively.  Further suppose these learning graphs are normalized such that their 
negative and positive complexities are equal.  

We construct a learning graph $\mathcal{G}$ for $f$ where the edges and vertices are given by connecting a 
new root node by an edge of 
weight one to the root nodes of each of $\mathcal{G}_1, \ldots, \mathcal{G}_c$.  Thus the negative 
complexity of $\mathcal{G}$ is at most $c(1 + \max_i C_0(\mathcal{G}_i)$.  

Now we construct the flow for a positive input $x$.  Let $\alpha$ be a minimal $1$-certificate for $x$ such that the 
certificate graph $H_\alpha$ is isomorphic to $H_i$, for some $i$.  Then the flow on $x$ is first directed 
entirely to the root node of $\mathcal{G}_i$.  It is then defined within $\mathcal{G}_i$ as in \lemref{lem:f}.
Thus the positive complexity of $\mathcal{G}$ is at most $c(1 + \max_i C_1(\mathcal{G}_i))$.  
\end{proof}

To make \thmref{thm:main} and \lemref{lem:directed} easier to apply, here we establish some basic intuitive properties about the 
complexity of the algorithm for different subgraphs.  Namely, we show that if $H' $ is a subgraph of $H$ then 
the complexity given by \lemref{lem:directed} for detecting $H'$ is at most that of $H$.  We show a similar statement 
when $H'$ is a vertex contraction of $H$.

\begin{lemma}\label{lem:sub}
Let $H$ be a directed $k$-vertex graph (possibly with self-loops) and $H'$ a subgraph of $H$.  Then the 
quantum query complexity of determining if an $n$-vertex graph $G$ contains $H'$ is at most that of determining 
if $G$ contains $H$ from \lemref{lem:directed}.
\end{lemma}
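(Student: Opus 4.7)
The plan is to exhibit, given optimal admissible parameters and a loading schedule for $H$, a corresponding set of admissible parameters and loading schedule for $H'$ whose cost under \thmref{thm:main} (as extended by \lemref{lem:directed}) is no larger. We may assume $H'$ has no isolated vertices, since an isolated vertex cannot be part of a 1-certificate graph and can be dropped without changing the detection problem.

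Let $\{r_i\}_{i\in V(H)}, \{d_{ij}\}_{\{i,j\}\in E(H)}$ and $S$ be the parameters and loading schedule achieving the optimum for $H$. I would define parameters for $H'$ by restriction, $r'_i=r_i$ and $d'_{ij}=d_{ij}$ for indices surviving in $H'$, and take $S'$ to be the subsequence of $S$ obtained by deleting every command that refers to a vertex or edge outside $V(H')\cup E(H')$. The first two admissibility conditions are inherited. The third (every vertex has a ``good enough'' neighbor) is the only point requiring attention; if it fails at some $i\in V(H')$, I would bump up $d'_{ij}$ for one neighbor $j\in V(H')$ of $i$ in $H'$ (which exists since $i$ is not isolated) until the inequality holds. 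Because $r_u\le n/4$ and the required lower bound on $d_{ij}$ is at most $(2r_i+1)/(2r_j+1)\le O(n)$, this adjustment keeps every parameter in the admissible range while changing the relevant cost contributions by at most a constant factor.

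The heart of the argument is then a term-by-term comparison using the ``global cost / local cost'' decomposition tabulated after \thmref{thm:main}. The setup sum for $H'$ is a sub-sum of that for $H$ over $E(H)$, so it is smaller. Each global cost is either $\sqrt{n/r_u}$ or $\sqrt{\max\{r_u,r_v\}/d_{uv}}$, and both are bounded below by $1$ thanks to $r_u\le n/4$ and $d_{uv}\le \max\{r_u,r_v\}$. Hence, since every stage of $S'$ is also a stage of $S$ and $\mathrm{VS}'_t\subseteq \mathrm{VS}_t$, $\mathrm{ES}'_t\subseteq \mathrm{ES}_t$, the product of global costs preceding any stage of $S'$ is a sub-product of that preceding the corresponding stage in $S$, and hence no larger. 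The local cost for a loaded edge is identical in both schedules; for loading a vertex~$i$ the local cost of $H'$ sums over a subset of the neighbors summed over in $H$ with non-negative summands, so it too is no larger. Multiplying, each stage cost of $S'$ is at most the corresponding stage cost of $S$, and the directed/self-loop version from \lemref{lem:directed} follows identically.

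The only genuinely delicate step is the admissibility fix above; everywhere else the inequality is monotone by construction. Once that is handled, the maximum over stages for $H'$ is dominated by the maximum over stages for $H$, giving the required bound on the quantum query complexity of detecting $H'$.
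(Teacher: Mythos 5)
Your proposal takes a genuinely different route from the paper, and that route has a real gap at exactly the point you flag as ``delicate.''  The paper does not restrict the parameters or the loading schedule at all: it reuses the learning graph for $H$ \emph{verbatim} --- same $L$-vertices, $L$-edges, weights, parameters and schedule --- and only redefines the flows.  For a positive instance $G$ of $H'$-containment one picks $a_1, \ldots, a_k$ with $(a_i,a_j) \in E(G)$ for all $(i,j) \in E(H')$ (choosing $a_t$ arbitrarily for vertices isolated in $H'$) and routes flow exactly as in the algorithm for $H$.  This is legitimate because in the basic learning graph model the flow depends only on which edge \emph{slots} are loaded, never on the values in those slots, so it is irrelevant that the slots corresponding to $E(H) \setminus E(H')$ need not contain edges of $G$.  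The terminals then contain all slots $(a_i,a_j)$ for $(i,j) \in E(H)$, hence in particular the index set of a $1$-certificate for $H'$-containment, and both $C_0$ and $C_1$ are literally unchanged.  No admissibility check or cost comparison is needed, and the bound holds for the very same parameters --- which is what \lemref{lem:contract} and the associativity application downstream actually rely on.

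The gap in your version: when condition~3 of admissibility fails for some $i \in V(H')$ (which can happen when every $H'$-neighbor $j$ of $i$ has $r_j < r_i$ and too small a degree, the witness neighbor from $H$ having been deleted), you must raise $d'_{ij}$ to roughly $(2r_i+1)/(2r_j+1) \approx r_i/r_j$.  This is not a constant-factor perturbation: it can multiply the term $d_{ij}$ appearing in the local cost of loading vertex $j$ (and the setup term $\min\{r_i,r_j\}\, d_{ij}$) by a factor that is polynomial in $n$, e.g.\ from $d_{ij}=1$ to $d'_{ij}=n^{\Omega(1)}$, while also shrinking the global cost $\sqrt{\max\{r_i,r_j\}/d_{ij}}$ of that edge.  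You give no argument that the resulting stage costs are still dominated by the maximum stage cost of the original schedule for $H$, and the term-by-term monotonicity you invoke everywhere else breaks at exactly this point.  Without repairing this step (or, better, adopting the observation that no re-parametrization is needed because the flow definition is value-oblivious), the proof is incomplete.
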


\begin{proof} 
Assume that the vertices of $H$ are labeled from $[k]$ and that $H'$ is labeled such $(i,j) \in E(H)$ for all 
$(i,j) \in E(H')$.

The learning graph we use for detecting $H'$ is the same as that for $H$.  For a graph $G$ containing a $H'$ as 
a subgraph, let $a_1, \ldots, a_k$ be such that $(a_i, a_j) \in G$ for all $(i, j) \in H'$.  (If $t$ is an isolated vertex in 
$H'$, then $a_t$ can be chosen arbitrarily).  The flow for $G$ is defined 
in the same way as in the learning graph for $H$.  Note that once 
$a_1, \ldots, a_k$ have been identified, the definition of flow depends only edge slots---not on edges---thus 
this definition remains valid for $H'$.  Furthermore all terminals of flow are labeled by edge slots
$(a_i, a_j)$ for all $(i, j) \in H$, and so also contain the edge slots for $H'$.  Thus this is a valid flow for 
detecting $H'$.  As the learning graph and flow are the same, the complexity will be as that given in 
\lemref{lem:directed}.
\end{proof}

\begin{lemma}
\label{lem:contract}
Let $H$ be a $k$-vertex graph and $H'$ a vertex contraction of $H$.  Then the quantum query complexity of 
detecting $H'$ is at most that of detecting $H$ given in \lemref{lem:directed}.
\end{lemma}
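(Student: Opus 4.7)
The plan is to reuse the learning graph $\G$ given by \lemref{lem:directed} for detecting $H$ (with some fixed admissible parameters), modifying it so that it detects $H'$ at no greater cost. Say $H'$ is obtained from $H$ by identifying two vertices $i,j \in V(H)$ into a single vertex $v$, so $E(H')$ comes from $E(H)$ by relabeling $i,j \mapsto v$ and dropping any resulting self-loops. Given a positive input $G$ containing $H'$ on vertices $a'_v, \{a'_u : u \neq v\}$, the idea is to ``lift'' this to an $H$-configuration by setting $a_i := a_j := a'_v$, $a_u := a'_u$ otherwise, and then to run the $H$-flow on $(a_1, \ldots, a_k)$.

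Because color classes $A_1, \ldots, A_k$ in our framework may overlap and self-loops are disregarded from edge labels, this lifted flow is well-defined in $\G$: every terminal carries edge slots $(a_u, a_w)$ for each $(u,w) \in E(H)$, which under $a_i = a_j = a'_v$ collapse exactly onto the edge slots of a copy of $H'$ in $G$ (an edge of $H$ between $i$ and $j$, if any, becomes a self-loop at $a'_v$ and is discarded). So the label contains a $1$-certificate for $H'$ on $G$, and the positive complexity of the flow is unchanged by the lift.

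The main obstacle, and the real work of the proof, is the probability analysis of \lemref{lem:vertexflow}: it picks up a factor $r_u / n$ per loaded vertex $u$ from the independent events $a_u \in \sigma(A_u)$ under a uniformly random $\sigma \in S_n$. In the lifted flow the two events $a'_v \in \sigma(A_i)$ and $a'_v \in \sigma(A_j)$ are not independent and can even be mutually exclusive. To handle this I would restrict $\G$ to $L$-vertex labels in which $A_i \subseteq A_j$ (WLOG $r_i \le r_j$); equivalently, merge $A_i$ into $A_j$ throughout the algorithm. The $(i,j)$-bipartite graph then becomes a collection of self-loops on $A_j$ and is discarded, so no setup cost is lost. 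The ``load $a_i$'' and ``load $a_j$'' stages of $\G$ collapse into a single stage loading $a'_v$ into $A_i$, whose local cost is bounded by the sum of the two original local costs (the total degree of $v$ in $H'$ is at most the total degree of $i$ plus the total degree of $j$ in $H$) and whose global cost $\sqrt{n/r_i}$ is at most the product $\sqrt{n/r_i} \cdot \sqrt{n/r_j}$ of the two original global costs. Running through the accounting of \thmref{thm:main} with this merged schedule, every stage's cost is bounded by a corresponding stage in the original $H$-algorithm, giving the claim.
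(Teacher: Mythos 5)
Your first two paragraphs are essentially the paper's entire proof: the paper also keeps the learning graph for $H$ unchanged and redefines only the flows by choosing $a_1,\ldots,a_k$ not necessarily distinct, adding just the parenthetical remark that nothing in \thmref{thm:main} requires the sets $A_1,\ldots,A_k$ to be distinct. Where you diverge is your third paragraph, and the concern you raise there is legitimate: for a contracted pair with $a_i=a_j=a'_v$, the event $\sigma^{-1}(a'_v)\in A_i\cap A_j$ has probability $|A_i\cap A_j|/n$, which for orbits with small intersection is genuinely smaller than the $\Omega(r_ir_j/n^2)$ claimed in \lemref{lem:vertexflow} whenever $r_ir_j>n$ (a regime that actually occurs in the associativity application, e.g.\ $r_2r_3=n^{10/7}$). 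So the per-class lower bound, and hence the maximum-vertex-ratio bound of \lemref{e:simple_costg}, does not transfer verbatim; the paper's one-line justification glosses over this. Your repair --- restricting to labels with $A_{\min}\subseteq A_{\max}$, merging the two vertex-loading stages, and checking that the merged global cost $\sqrt{n/r_i}$ and local cost are dominated by the products and sums of the originals --- is a sound way to close the gap, though you should note that the restricted label set is still $S_n$-invariant (so Lemmas \ref{lem:trans_cons} and \ref{e:symmetry} still apply) and that the $(i,j)$-bipartite graph does not literally become self-loops when $A_i\subseteq A_j$; the correct move is to omit $G_{ij}$ and, if the contraction creates a self-loop at $v$, load the slot $(a'_v,a'_v)$ for free when $a'_v$ is loaded, exactly as in \lemref{lem:directed}. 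An alternative repair that keeps the paper's construction untouched is to replace the maximum over orbits by the flow-weighted average $\sum_{[P]}p_y([P]^+)\sqrt{|[P]|/|W_{y,[P]}|}$ (which is what the proof of \lemref{e:simple_costg} actually controls before passing to the max); by Cauchy--Schwarz this is at most $\sqrt{|V|/N}$ with $N$ the total number of flow-carrying labels, and since a uniformly random label has $A_i$ and $A_j$ independent, this restores the product bound $n/\sqrt{r_ir_j}$. Either way, your instinct that the independence step needs an argument is correct, and your write-up is more careful on this point than the paper's.
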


\begin{proof} 
Again we assume that the vertices of $H$ are labeled from $[k]$.  The key point is the following: if 
$H'$ is a vertex contraction of $H$, then there are $z_1, \ldots, z_k \in [k]$ (not necessarily distinct) such that 
$(z_i, z_j) \in E(H')$ if and only if $(i,j) \in E(H)$.  The learning graph for $H'$ will be the same as that for $H$ except for the flows.
For a graph $G$ containing $H'$, we choose vertices $a_1, \ldots, a_k$ (not necessarily distinct) such that 
if $(z_i, z_j) \in E(H')$ then $(a_i, a_j) \in E(G)$.  As $(z_i, z_j) \in E(H')$ if and only if $(i,j) \in E(H)$, we can define 
the flow as in \lemref{lem:directed} for $a_1, \ldots, a_k$ to load a copy of $H'$.  (Note that there is no restriction in 
the proof of that theorem that the sets $A_1, \ldots, A_k$ be distinct).  This gives an algorithm for detecting 
$H'$ with complexity at most that given by \lemref{lem:directed} for detecting $H$.
\end{proof}

\section{Associativity testing}
Consider an operation $\circ : S \times S \rightarrow S$ and let $n=|S|$.  We wish to determine if $\circ$ is 
\emph{associative} on $S$, meaning that $a \circ (b \circ c) = (a \circ b) \circ c$ for all $a,b,c \in S$.  We are given 
black box access to $\circ$, that is, we can make queries of the form $(a,b)$ and receive the answer 
$a \circ b$.  

\begin{theorem} 
Let $S$ be a set of size $n$ and $\circ : S \times S \rightarrow S$ be an operation that can be accessed in black-box 
fashion by queries $(a,b)$ returning $a \circ b$.  There is a bounded-error quantum query 
algorithm to determine if $(\circ, S)$ is associative making $O(n^{10/7})$ queries.    
\end{theorem}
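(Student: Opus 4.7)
The plan is to cast associativity testing as a colored directed subgraph detection problem and invoke \thmref{thm:main} via the reductions in \secref{sec:extension}.

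First I would view $\circ$ as a function $f : [n]^{n \times n} \to \B$ that equals $1$ iff $\circ$ is non-associative, where the value at position $(i,j)$ is $i \circ j \in [n]$. Following the introduction, whenever $a \circ (b \circ c) \ne (a \circ b) \circ c$, setting $d = a \circ b$ and $e = b \circ c$ gives a minimal 1-certificate whose index set is $\{(a,b),(b,c),(a,e),(d,c)\}$, with the color constraint that the values queried at $(a,e)$ and $(d,c)$ differ. When the five elements $a,b,c,d,e$ are distinct, the undirected certificate graph is the 4-path $e\!-\!a\!-\!b\!-\!c\!-\!d$; otherwise it is a vertex contraction of this path.

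Next I would apply \thmref{thm:several} (to aggregate over the different certificate graph shapes arising from coincidences among $a,b,c,d,e$), \lemref{lem:f} (to reduce computing $f$ to detecting each directed colored certificate graph), and \lemref{lem:directed} (to reduce detecting the directed version to detecting the undirected 4-path via \thmref{thm:main}). By \lemref{lem:contract}, the contracted cases cannot be harder than the full 4-path, so it suffices to bound the complexity of detecting the undirected 4-path by $O(n^{10/7})$.

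It then remains to choose admissible size parameters $r_1, \ldots, r_5$ and degree parameters $d_{12}, d_{23}, d_{34}, d_{45}$, together with a loading schedule over the five vertices and four edges, so that every setup, vertex-loading, and edge-loading cost in \thmref{thm:main} is $O(n^{10/7})$. Using the local/global cost decomposition given after \thmref{thm:main}, each cost is a monomial in the $r_i$ and $d_{ij}$, so picking exponents $r_i = n^{\alpha_i}$ and $d_{ij} = n^{\beta_{ij}}$ reduces the task to a linear program in the $\alpha_i, \beta_{ij}$, which can be solved using the code the authors mention.

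The main obstacle is the parameter optimization itself. Unlike the triangle, the 4-path is asymmetric: the two endpoints $d,e$ have degree one while the three interior vertices $a,b,c$ have degree two, so the endpoints should be loaded into smaller sets and the order of loads matters, as does the choice between the dense (type~1) and sparse (type~2) regimes for each edge load. Once LP-optimal parameters and schedule are fixed, verifying admissibility and plugging them into \thmref{thm:main} to obtain $O(n^{10/7})$ is routine, entirely analogous to the $n^{9/7}$ computation in \thmref{thm:triangle}.
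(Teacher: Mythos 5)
Your reduction scaffolding matches the paper's proof exactly: the paper likewise models $\circ$ as $f:[n]^{n\times n}\to\B$, identifies the certificate graph as the directed $4$-path on vertices $a_1=b\circ c$, $a_2=a$, $a_3=b$, $a_4=c$, $a_5=a\circ b$ with edges $(2,1),(2,3),(3,4),(5,4)$, disposes of the degenerate cases via \lemref{lem:contract} and \thmref{thm:several}, and then feeds the path into \thmref{thm:main} through \lemref{lem:directed} and \lemref{lem:f}. Up to that point your argument is sound and is the paper's argument.

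The gap is that you never exhibit the parameters, and that is where the entire quantitative content of the theorem lives. Saying ``the costs are monomials, so optimize the exponents by LP'' establishes only that \emph{some} exponent is achievable by this framework, not that the achievable exponent is $10/7$; a proof of the $O(n^{10/7})$ bound must display an admissible feasible point and check each cost. The paper's choice is far from the symmetric guess one might make for a path: it takes $r_1=n/10$ (linear in $n$), $r_2=n^{4/7}$, $r_3=n^{6/7}$, $r_4=n^{5/7}$, $r_5=1$, with $d_{21}=d_{23}=n^{6/7}$, $d_{34}=n^{5/7}$, $d_{54}=1$, and the loading schedule $[1,2,4,3,(2,1),(2,3),(3,4),5,(5,4)]$, so that the edge $(5,4)$ is loaded in the sparse (type~2) regime and the vertex $a_5$ is deferred until after all other edges are loaded. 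Without committing to such a point one cannot verify admissibility (e.g.\ the condition $d_{ij}(2r_j+1)/(2r_i+1)\ge 1$ for every vertex), cannot determine which edges fall in the dense versus sparse case, and cannot confirm that every one of the ten stage costs is $O(n^{10/7})$. You correctly flag the asymmetry of the path and the relevance of the schedule, but flagging the obstacle is not the same as overcoming it; as written, the proposal proves only that the complexity is $O(n^{c})$ for whatever $c$ the LP returns.
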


\begin{proof}
If $\circ$ is not associative, then there is a triple $a_2,a_3,a_4$ such that 
$a_2 \circ (a_3 \circ a_4) \ne (a_2 \circ a_3) \circ a_4$.  
A certificate to the non-associativity of $\circ$ is given by
$a_3\circ a_4 = a_1, a_2 \circ a_1, a_2 \circ a_3 =a_5$, and $a_5 \circ a_4$ such that 
$a_2 \circ a_1 \ne a_5 \circ a_4$ (see Figure~\ref{fig:assoc}).  Note that not all of 
$a_1, \ldots, a_5$ need to be distinct.  

Let $H$ be a directed graph on $5$ vertices with directed edges $(2,1), (2,3), (3,4), (5,4)$.  Each non-associative 
input has a certificate graph either isomorphic to $H$ or a vertex contraction of $H$, in the case that not all of
$a_1, \ldots, a_5$ are distinct.  By \lemref{lem:contract}, the complexity of a detecting a vertex contraction of $H$ 
is dominated by that of detecting $H$, and so by \thmref{thm:several} it suffices to show the theorem for $H$.

\begin{figure}
\centerline{\includegraphics[scale=0.4]{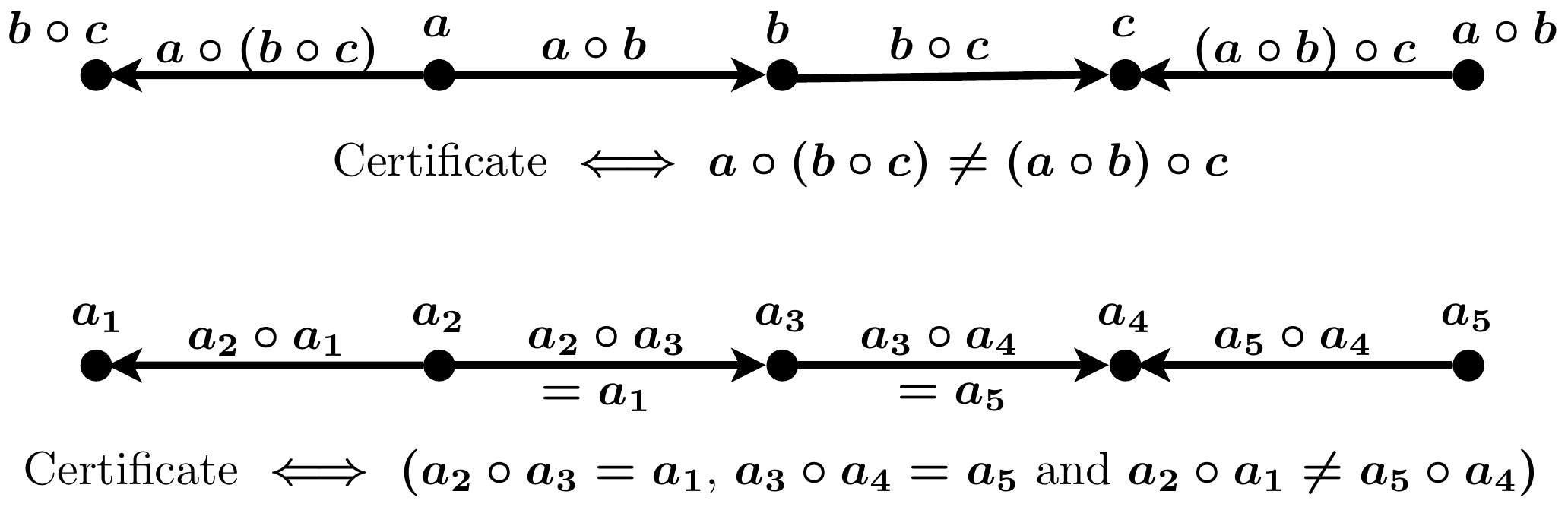}}
\caption{The $5$-vertex certificate graph for associativity. Both pictures represent the same graph certificate $H$, where
the second one has been labelled according to the notations of our abstract language.\label{fig:assoc}} 
\end{figure}

We use the algorithmic framework of \thmref{thm:main} to load the graph $H$.  Let 
$r_1 = n/10, r_2 = n^{4/7}, r_3=n^{6/7}, r_4=n^{5/7}, r_5=1$ 
and $d_{21}=n^{6/7}, d_{23}=n^{6/7}, d_{34}=n^{5/7}, d_{54}=1$.  Here $d_{ij}$ indicates the average degree of 
vertices in the smaller of $A_i, A_j$ for edges directed from $A_i$ to $A_j$.  
It can be checked that this is an admissible set of 
parameters.  Note that as $r_5 d_{54} << r_4$, loading
$a_5 \circ a_4$ will be done in the sparse regime.  We use the loading schedule $S=[1,2,4,3,(2,1),(2,3),(3,4), 5, (5,4)]$.  
The setup cost becomes
$r_2 d_{21}+r_2 d_{23}+r_4d_{34}+r_5d_{54} =n^{10/7}$, and the costs of loading the vertices and edges 
are all bounded by $n^{10/7}$ as given in the following tables.\\

\begin{tabular}{|r|c c c c|}
\hline
Stage & load $a_1$ & load $a_2$ & load $a_4$ & load $a_3$ \\
\hline
Global cost 
& $\sqrt{{n}/{r_1}}$ 
& $\sqrt{{n}/{r_2}}$ 
& $\sqrt{{n}/{r_4}}$ 
& $\sqrt{{n}/{r_3}}$ \\
Local cost 
& $\sqrt{n} {r_2}d_{21}/{r_1}$
& $\sqrt{n} (d_{21}+d_{23}) $ 
& $\sqrt{n} (d_{34} + r_5 d_{54}/r_4)$
& $\sqrt{n} (r_4d_{34}/r_3 + r_2 d_{23} /r_3)$ \\
\hline
Cost 
& $\sqrt{n} r_2 d_{21}/r_1$ 
& $\frac{n}{\sqrt{r_1}} (d_{21} + d_{23})$ 
& $\frac{n^{3/2}}{\sqrt{r_1r_2}} (d_{34} + r_5 d_{54}/r_4)$
& $\frac{n^{2}}{\sqrt{r_1r_2 r_4}} (r_4d_{34}/r_3 + r_2 d_{23} /r_3)$ \\
\hline
Value & $n^{13/14}$ & $n^{19/14}$ & $n^{10/7}$ & $n^{10/7}$\\
\hline
\end{tabular}

\begin{tabular}{|r|c c c|}
\hline
Stage & load $a_2 \circ a_1$ & load $a_2 \circ a_3$  & load $a_3 \circ a_4$\\
\hline
Global cost 
& $\sqrt{r_1 / d_{21}}$ 
& $\sqrt{r_3 / d_{23}}$ 
& $\sqrt{r_3 / d_{34}}$ 
\\
Local cost 
& ${r_1 }$
& ${r_3 }$
& ${r_3  }$\\
\hline Cost 
& $\frac{n^2}{\sqrt{r_2 r_3 r_4}}\sqrt{r_1}$ 
& $\frac{n^{2}}{\sqrt{r_2 r_4 d_{21}}} \sqrt{r_1r_3}$ 
& $\frac{n^{2}}{\sqrt{r_2 r_4 d_{21} d_{23}}} r_3$ 
\\
\hline
Value & $n^{10/7}$ & $n^{19/14}$ & $n^{19/14}$ \\
\hline
\end{tabular}

\begin{tabular}{|r|c c|}
\hline
Stage & load $a_5$ & load $a_5 \circ a_4$ \\
\hline
Global cost 
& $\sqrt{n/r_5}$ & \\ 
Local cost 
& $\sqrt{n} d_{54}$ & $\sqrt{r_4 r_5 }$  \\
\hline Cost 
& $\frac{n^{5/2}}{\sqrt{r_2 r_4 d_{21} d_{23} d_{34}}}\sqrt{r_3} d_{54}$ 
& $\frac{n^{5/2}}{\sqrt{r_2 d_{21} d_{23} d_{34}}} \sqrt{r_3}$  \\
\hline
Value & $n^{15/14}$ & $n^{10/7}$ \\
\hline
\end{tabular}
\end{proof}

The algorithms for finding $k$-vertex subgraphs given in~\cite{zhu11, LMS11} have complexity $O(n^{1.48})$ 
for finding a $4$-path, but it was not realized there that these algorithms apply to a much broader 
class of functions like associativity.  The key property that is used for this application is that in the basic learning 
graph model the complexity depends only on the index sets of $1$-certificates and not on the underlying alphabet.
This property was previously observed by Mario Szegedy in the context of limitations of the basic learning graph 
model \cite{Robin}.  He observed that the basic learning graph complexity of the threshold-$2$ function is 
$\Theta(n^{2/3})$, rather than the true value $\Theta(\sqrt{n})$, as threshold-2 and element distinctness have the same 
$1$-certificate index sets.

\section*{Acknowledgements}
We would like to thank Aleksandrs Belovs for discussions and comments on an earlier draft of this work.

\bibliographystyle{plain}
\bibliography{clique}

\begin{thebibliography}{10}

\bibitem{Belovs12}
A.~Belovs.
\newblock Learning-graph-based quantum algorithm for k-distinctness.
\newblock In {\em Prooceedings of 53rd Annual IEEE Symposium on Foundations of
  Computer Science}, 2012.

\bibitem{spanCert}
A.~Belovs.
\newblock Span programs for functions with constant-sized 1-certificates.
\newblock In {\em Proceedings of 44th Symposium on Theory of Computing
  Conference}, pages 77--84, 2012.

\bibitem{BL11}
A.~Belovs and T.~Lee.
\newblock Quantum algorithm for k-distinctness with prior knowledge on the
  input.
\newblock Technical Report arXiv:1108.3022, arXiv, 2011.

\bibitem{DT08}
S.~D{\"{o}}rn and T.~Thierauf.
\newblock The quantum complexity of group testing.
\newblock In {\em Proceedings of the 34th conference on current trends in
  theory and practice of computer science}, pages 506--518, 2008.

\bibitem{gro96}
Lov~K. Grover.
\newblock {A fast quantum mechanical algorithm for database search}.
\newblock In {\em {Proceedings of 28th ACM Symposium on the Theory of
  Computing}}, pages 212--219, 1996.

\bibitem{hs05}
P.~H{\o}yer and R.~{\v S}palek.
\newblock Lower bounds on quantum query complexity.
\newblock {\em Bulletin of the European Association for Theoretical Computer
  Science}, 87, 2005.
\newblock Also arXiv report quant-ph/0509153v1.

\bibitem{HoyerLeeSpalek07negativeadv}
Peter H{\o}yer, Troy Lee, and Robert {\v S}palek.
\newblock Negative weights make adversaries stronger.
\newblock In {\em Proceedings of 39th ACM Symposium on Theory of Computing},
  pages 526--535, 2007.

\bibitem{Robin}
R.~Kothari.
\newblock Personal Communication, 2011.

\bibitem{LMS11}
T.~Lee, F.~Magniez, and M.~Santha.
\newblock A learning graph based quantum query algorithm for finding
  constant-size subgraphs.
\newblock Technical Report arXiv:1109.5135, arXiv, 2011.

\bibitem{MSS07}
F.~Magniez, M.~Santha, and M.~Szegedy.
\newblock Quantum algorithms for the triangle problem.
\newblock {\em SIAM Journal on Computing}, 37(2):413--424, 2007.

\bibitem{Reichardt10advtight}
Ben~W. Reichardt.
\newblock Reflections for quantum query algorithms.
\newblock In {\em Proceedings of 22nd ACM-SIAM Symposium on Discrete
  Algorithms}, pages 560--569, 2011.

\bibitem{shor97}
P.~Shor.
\newblock Algorithms for quantum computation: Discrete logarithm and factoring.
\newblock {\em SIAM Journal on Computing}, 26(5):1484--1509, 1997.

\bibitem{zhu11}
Y.~Zhu.
\newblock Quantum query complexity of subgraph containment with constant-sized
  certificates.
\newblock Technical Report arXiv:1109.4165v1, arXiv, 2011.

\end{thebibliography}
\end{document}